\newcommand{\ABox}{
\raisebox{3pt}{\framebox[6pt]{\rule{6pt}{0pt}}}
}
\newcommand{\e}{\varepsilon}
\newcommand{\id}{{\tt ID}}
\newcommand{\ssp}{{\tt sp}}
\newcommand{\disk}{{\tt disk}}
\newcommand{\alg}{{\sc LOS}}
\newcommand{\palg}{{\tt PLOS}}
\newcommand{\cc}{{\sc ClusterCover}}
\newcommand{\N}{L}  
\newcommand{\C}{C}
\newcommand{\udel}{{\tt UDel}}
\newcommand{\ydel}{{\tt YDel}}
\newcommand{\ldel}{{\tt LDel}}
\newcommand{\bps}{{\tt BPS}}
\newcommand{\lmst}{{\tt LMST}}
\newcommand{\pldel}{{\tt PLDel}}
\newcommand{\del}{{\tt Del}}
\newcommand{\oyao}{{\sc OrderedYao}}
\newcommand{\w}{\omega}
 \gdef\xxxmark{%
   \expandafter\ifx\csname @mpargs\endcsname\relax 
     \expandafter\ifx\csname @captype\endcsname\relax 
       \marginpar{xxx}
     \else
       xxx 
     \fi
   \else
     xxx 
   \fi}
 \gdef\xxx{\@ifnextchar[\xxx@lab\xxx@nolab}
 \long\gdef\xxx@lab[#1]#2{{\bf [\xxxmark #2 ---{\sc #1}]}}
 \long\gdef\xxx@nolab#1{{\bf [\xxxmark #1]}}
 \gdef\turnoffxxx{\long\gdef\xxx@lab[##1]##2{}\long\gdef\xxx@nolab##1{}}%
\title{\vspace{-2em}Localized Spanners for Wireless Networks}
\titlerunning{Spanners for UDGs and qUDGs}
\author{Mirela Damian\thanks{Supported by NSF grant CCF-0728909.}
\inst{1}\ \and Sriram V. Pemmaraju\inst{2}}
\institute{Dept. Comput. Sci., Villanova Univ., Villanova, PA
19085, USA. \email{mirela.damian@villanova.edu}.
\and
Dept. Comput. Sci., Univ. of Iowa, Iowa City, IA
52246, USA. \email{sriram@cs.uiowa.edu}.
}
\date{}
\begin{document}

\maketitle

\begin{abstract}
We present a new efficient localized algorithm to construct, for any
given quasi-unit disk graph $G=(V,E)$ and any $\e > 0$,
a $(1+\e)$-spanner for $G$ of maximum
degree $O(1)$ and total weight $O(\w(MST))$, where $\w(MST)$
denotes the weight of a minimum spanning tree for $V$. We further
show that similar localized techniques can be used to construct, for a given unit disk
graph $G = (V, E)$, a planar $C_{del}(1+\e)(1+\frac{\pi}{2})$-spanner
for $G$ of maximum degree $O(1)$ and total weight $O(\w(MST))$. Here $C_{del}$
denotes the stretch factor of the unit Delaunay triangulation for $V$.
Both constructions can be completed in $O(1)$ communication rounds,
and require each node to know its own coordinates.
\end{abstract}

\section{Introduction}
For any fixed $\alpha$, $0 < \alpha \le 1$, a graph $G = (V,E)$ is an
$\alpha$-\emph{quasi unit disk graph} ($\alpha$-QUDG) if there is an
embedding of $V$ in the Euclidean plane such that, for every vertex
pair $u, v \in V$, $uv \in E$ if $|uv| \le \alpha$, and $|uv|
\not\in E$ if $|uv| > 1$. The existence of edges with length in the
range $(\alpha, 1]$ is specified by an adversary. If $\alpha=1$, $G$
is called a \emph{unit disk graph} (UDG). $\alpha$-QUDGs
have been proposed as models for ad-hoc wireless networks
composed of homogeneous wireless nodes 
that communicate over a wireless medium without the aid of a fixed
infrastructure. 
%
Experimental studies show that the transmission range of a wireless
node is not perfectly circular and exhibits a transitional region
with highly unreliable links~\cite{ZK-qudg-07} (see for example
Fig.~\ref{fig:grid}a, in which the shaded region represents the
actual transmission range). In addition,
environmental conditions and physical obstructions adversely affect
signal propagation and ultimately the transmission range of a
wireless node. The parameter $\alpha$ in the $\alpha$-QUDG model
attempts to take into account such imperfections.

Wireless nodes are often powered by batteries and have limited
memory resources. These characteristics make it critical to
compute and maintain, at each node, only a subset of neighbors
that the node communicates with. This problem, referred to as
\emph{topology control}, seeks to adjust the transmission power
at each node so as to maintain connectivity, reduce
collisions and interference, and extend the battery lifetime
and consequently the network lifetime.


Different topologies optimize different performance metrics. In this
paper we focus on properties such as \emph{planarity}, \emph{low weight},
\emph{low degree}, and the \emph{spanner} property.
Another important property is
\emph{low interference}~\cite{BRWZ04,JC05,RSWZ05}, which we do not address
in this paper.
A graph is \emph{planar} if no two edges cross each other (i.e, no two edges
share a point other than an endpoint). Planarity is important to various
memoryless routing algorithms~\cite{KarpKung00,BMSU01}.
A graph is called \emph{low weight} if its total edge length,
defined as the sum of the lengths of all its edges, is within a constant
factor of the total edge length of the Minimum Spanning Tree (MST). It was
shown that the total energy consumed by sender nodes broadcasting along
the edges of a MST is within a constant factor of the
optimum~\cite{MSTBroadcast02}. 
\emph{Low degree} (bounded above by a constant) at each node is also
important for balancing
out the communication overhead among the wireless nodes. 
If too many edges are eliminated from the original graph however, paths
between pairs of nodes may become unacceptably long and offset the gain
of a low degree. This renders necessary a stronger requirement,
demanding that the reduced topology be a \emph{spanner}.
Intuitively, a structure is a spanner if it maintains
short paths between pairs of nodes in support of fast message delivery and
efficient routing. We define this formally below.

Let $G = (V, E)$ be a connected graph representing a wireless network.
For any pair of nodes $u,v \in V$, let $\ssp_G(u,v)$ denote 
a shortest path in $G$ from $u$ to $v$, and let $|\ssp_G(u,v)|$ denote
the length of this path. Let $H \subseteq G$ be a
connected subgraph of $G$. For fixed $t \ge 1$, $H$ is called a
$t$-\emph{spanner} for $G$ if, for all pairs
of vertices $u, v \in V$, $|\ssp_H(u,v)| \le t \cdot |\ssp_G(u,v)|$. The
value $t$ is called the \emph{stretch factor} of $H$. If $t$ is constant,
then $H$ is called a \emph{length spanner}, or simply a \emph{spanner}.
A triangulation of $V$ is a \emph{Delaunay triangulation},
denoted by \del($V$), if the
circumcircle of each of its triangles is empty of nodes in $V$.

Due to the limited resources and high mobility of the wireless nodes, it is
important to \emph{efficiently} construct and maintain a spanner in a \emph{localized} manner.
A \emph{localized} algorithm is a distributed
algorithm in which each node $u$ selects all its incident edges based
on the information from nodes within a constant number of hops from $u$.
Our communication model is the standard synchronous message passing model,
which ignores channel access and collision issues. In this communication model,
time is divided into \emph{rounds}. In a round, a node is able to receive all
messages sent in the previous round, execute local computations, and send messages
to neighbors. We measure the communication cost of our algorithms in terms of
\emph{rounds of communication}. The length of messages exchanged
between nodes is logarithmic in the number of nodes.

\paragraph{Our Results.}
In this paper we present the first localized method to construct, for any QUDG
$G = (V, E)$ and any $\e > 0$,
a $(1+\e)$-spanner for $G$ of maximum degree $O(1)$ and total weight $O(\w(MST))$, where
$\w(MST)$ denotes the weight of a minimum spanning tree for $V$. We further extend
our method to construct, for any UDG $G = (V, E)$,  a \emph{planar} spanner for $G$ of
maximum degree $O(1)$ and
total weight $O(\w(MST))$. The stretch factor of the spanner is bounded above by
$C_{del}(1+\e)(1+\frac{\pi}{2})$, where $C_{del}$ is the stretch factor of the
unit Delaunay triangulation for $V$ ($C_{del} \le 2.42$~\cite{LCW02}).
This second result resolves an open question
posed by Li et al. in~\cite{lws-ieee-04}.
Both constructions can be completed in $O(1)$ communication rounds,
and require each node to know its own coordinates.

\subsection{Related Work}
Several excellent surveys on spanners exist~\cite{Smid00,RajSurvey02,GK06,ns-gsn-07}.
In this section we restrict our attention to \emph{localized} methods for
constructing spanners for a given graph $G=(V,E)$. We proceed with a discussion
on non-planar structures for UDGs first. Existing results are summarized
in the first four rows of Table~\ref{tab:results}.

The \emph{Yao graph}~\cite{Yao82} with an integer parameter $k \ge 6$, denoted $YG_k$,
is defined as follows. At each node
$u \in V$, any $k$ equal-separated rays originated at $u$ define $k$ cones.
In each cone, pick a shortest edge $uv$, if there is any, and add to $YG_k$
the directed edge $\overrightarrow{uv}$. Ties are broken arbitrarily or by
smallest \id. The Yao graph is a spanner with stretch factor
$\frac{1}{1 - 2 \sin{\pi/k}}$, however its degree can be as high as $n-1$.
To overcome this shortcoming, Li et al.~\cite{li02sparse} proposed another
structure called \emph{YaoYao} graph $YY_k$, which is constructed by applying
a reverse Yao structure on $YG_k$: at each node $u$ in $YG_k$, discard all
directed edges $\overrightarrow{vu}$ from each cone centered at $u$, except for
a shortest one (again, ties can be broken arbitrarily or by smallest \id).
$YY_k$ has maximum node degree $2k$,
a constant. However, the tradeoff is unclear in that the question of
whether $YY_k$ is a spanner or not remains open. Both $YG_k$ and $YY_k$
have total weight $O(n)\cdot\w(MST)$~\cite{d-yys-08}.
Li et al.~\cite{WangLi03} further proposed another sparse structure,
called \emph{YaoSink} $YS_k$, that satisfies both the spanner and the
bounded degree properties. The sink technique replaces each directed star
in the Yao graph consisting of all links directed into
a node $u$, by a tree $T(u)$ with sink $u$ of bounded degree.
However, neither of these structures has low weight.

\begin{table}[htpb]
\begin{center}
\begin{tabular}{|l|c|c|c|c|c|c|} \hline
Structure & Planar? & Spanner? & Degree & Weight Factor & Comm. Rounds \\ \hline
{\tt YG$_k$}, $k \ge 6$~\cite{Yao82} & N & Y & $O(n)$ & $O(n)$  & $O(1)$ \\ \hline
{\tt YY$_k$}, $k \ge 6$~\cite{li02sparse} & N & ? & $O(1)$ & $O(n)$ & $O(1)$ \\ \hline
{\tt YS$_k$}, $k \ge 6$~\cite{WangLi03} & N & Y & $O(1)$ & $O(n)$ & $O(1)$ \\ \hline
\rowcolor{cyan}
{\tt LOS} [this paper] & N & Y & $O(1)$ & $O(1)$ & $O(1)$ \\ \hline
{\tt RDG}~\cite{GGH+01} & Y & Y & $O(n)$ & $O(n)$  & $O(1)$ \\ \hline
$\ldel^k$, $k \ge 2$~\cite{LCW02} & Y & Y & $O(n)$ & $O(n)$ & $O(1)$ \\ \hline
\pldel~\cite{LCW02,AraujoR04} & Y & Y & $O(n)$ & $O(n)$ & $O(1)$ \\ \hline
{\tt YaoGG}~\cite{li02sparse}& Y & N & O(n) & $O(n)$ & $O(1)$ \\ \hline
{\tt OrdYaoGG}~\cite{SWLF04}& Y & N & O(1) & $O(n)$ & $O(1)$ \\ \hline
\bps~\cite{WangLi03,LiWang04}& Y & Y & O(1) & $O(n)$ & $O(n)$ \\ \hline
{\tt RNG'}~\cite{li-localmst-03}& Y & N & O(1) & $O(1)$ & $O(1)$ \\ \hline
$\lmst_k, k \ge 2$~\cite{lws-ieee-04}& Y & N & O(1) & $O(1)$ & $O(1)$ \\ \hline
\rowcolor{cyan}
{\tt PLOS} [this paper] & Y & Y & O(1) & $O(1)$ & $O(1)$ \\ \hline
\end{tabular}
\end{center}
\caption[]{Results on localized methods for UDGs.}
\label{tab:results}
\end{table}

We now turn to discuss \emph{planar} structures for UDGs.
The relative neighborhood graph (RNG)~\cite{god-pr-80}
and the Gabriel graph (GG)~\cite{gg-geo-69} can both be constructed locally,
however neither is a spanner~\cite{BDEK06}.
On the other hand, the Delaunay triangulation \del($V$) is a planar $t$-spanner
of the complete Euclidean graph with vertex set $V$.
This result was first proved by Dobkin, Friedman and Supowit~\cite{dfs-dg-90},
for $t = \frac{1 + \sqrt{5}}{2}\pi \approx 5.08$, and was further improved
to $t = \frac{4 \sqrt{3}}{9} \pi \approx 2.42$ by Keil and Gutwin~\cite{kg-dt-92}.
Das and Joseph~\cite{dj-tcg-89} generalize these results by identifying
two properties of planar graphs, the good polygon and diamond properties,
which imply that the stretch factor is bounded above by a constant.



For a given point set $V$, the unit Delaunay triangulation of $V$, denoted \udel($V$),
is the graph obtained by removing all Delaunay edges from \del($V$) that are
longer than one unit.
It was shown that \udel($V$) is a $t$-spanner of the unit-disk graph
UDG($V$), with $t = \frac{4\sqrt{3}}{9}\pi \approx 2.42$~\cite{LCW02}.

Gao et al.~\cite{GGH+01} present a localized algorithm to build a
planar spanner called restricted Delaunay graph ({\tt RDG}), which is a
supergraph of \udel($V$).
Li et al.~\cite{LCW02} introduce the notion of a
$k$-\emph{localized Delaunay triangle}: $\triangle abc$ is called
$k$-\emph{localized Delaunay} if the interior of its circumcircle does not
contain any node in $V$ that is a $k$-neighbor of $a$, $b$ or $c$,
and all edges of $\triangle abc$ are no longer than
one unit. The authors describe a localized method to construct, for
fixed $k \ge 1$, the $k$-localized Delaunay graph $\ldel^k(V)$, which
contains all Gabriel edges and edges of all $k$-localized
Delaunay triangles. They show that (i) $\ldel^k(V)$ is a supergraph
of $\udel(V)$ (and therefore a $\frac{4\sqrt{3}}{9}\pi$-spanner), (ii)
$\ldel^k(V)$ is planar, for any $k \ge 2$, and (iii) $\ldel^1(V)$ may
not be planar, but a planar subgraph $\pldel(V) \subseteq  \ldel^1(V)$ that
retains the spanner property can be locally extracted from $\ldel^1(V)$.
Their planar spanner constructions take 4 rounds of communication and a
total of $O(n)$ messages ($O(n \log n)$ bits).
Ara{\'u}jo and Rodrigues~\cite{AraujoR04} improve upon the communication
time for $\pldel$ and devise a method to compute $\pldel(V)$ in one
single communication step. Both $\pldel(V)$ and $\ldel^k(V)$, for $k \ge 1$,
may have arbitrarily large degree and weight.

To bound the degree, several methods apply the \emph{ordered Yao} structure
on top of an unbounded-degree planar structure. This idea was first introduced by
Bose et al. in~\cite{BGMS02},
and later refined by Li and Wang in~\cite{WangLi03,LiWang04}. Since the
ordered Yao structure is relevant to our work in this paper as well,
we pause to discuss the \oyao\ method for constructing this structure.
The \oyao\ method is outlined in Table~\ref{tab:orderedyao}. The main idea
is to define an ordering $\pi$ of the nodes such that each node $u$ has
a limited number of neighbors (at most 5) who are predecessors in $\pi$;
these predecessors are used to define a small number of open cones
centered at $u$, each of which will contain at most one neighbor of $u$
in the final structure. To maintain the spanner property of the original graph,
a short path connecting all neighbors of $u$ in each cone is used
to replace the edges incident to $u$ that get discarded from the original graph.

%

\noindent
Thm.~\ref{thm:oyao} summarizes the important properties of the structure computed by the \oyao\ method.
\begin{theorem}
If $G$ is a planar graph, then the output $G'$ obtained by executing \oyao$(G)$ is a planar $(1+\frac{\pi}{2})$-spanner for $G$ of maximum degree 25~\emph{\cite{WangLi03}}.
\label{thm:oyao}
\end{theorem}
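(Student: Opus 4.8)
The plan is to verify the three asserted properties of $G'$—planarity, stretch factor $1+\frac{\pi}{2}$, and bounded degree—by exploiting the single structural fact that drives the \oyao\ method: for a planar graph one can order the vertices so that every node has at most five neighbors preceding it. First I would establish this ordering. Since a planar graph has at most $3|V|-6$ edges (Euler's formula), some vertex has degree at most $5$, and the same holds for every subgraph, so planar graphs are $5$-degenerate. Repeatedly deleting a minimum-degree vertex and reversing the deletion sequence yields an order $\pi=(v_1,\dots,v_n)$ in which each $v_i$ has at most five neighbors among $v_1,\dots,v_{i-1}$. These (at most five) predecessor directions partition the plane around $v_i$ into at most five open cones, which is exactly the cone structure the method uses.

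For the degree bound I would separate the edges incident to a node $u$ in $G'$ into those joining $u$ to a predecessor and those created while routing the replacement paths for its successors. Processing $u$ retains at most one direct neighbor per cone, and the at most five predecessor cones, together with the path edges charged to $u$ by its at most five predecessors, give a constant total; a careful accounting of how many times a node can appear as a cone representative or as an internal vertex of a replacement path yields the stated bound of $25$ (morally $5\times 5$). The point is that $5$-degeneracy caps \emph{simultaneously} the number of cones at $u$ and the number of other nodes that can route a path through $u$.

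Planarity I would argue by showing that the replacement paths introduce no crossings. Each such path connects the neighbors of $u$ lying in one cone, taken in angular order, so consecutive neighbors $w_i,w_{i+1}$ are consecutive in the rotational order of the plane embedding of $G$ and bound a common face incident to $u$; the chord $w_iw_{i+1}$ can therefore be drawn inside that face. Since faces of the embedding are interior-disjoint, chords routed in distinct faces cannot cross, and the discarded edges $uw$ were themselves edges of $G$, so no replacement path crosses a surviving edge either. Hence $G'$ inherits a plane embedding from $G$.

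The stretch factor is the crux, and the main obstacle. It suffices to show that every edge $uw$ of $G$ admits a path of length at most $(1+\frac{\pi}{2})|uw|$ in $G'$, since concatenating such paths along a shortest path of $G$ then gives $|\ssp_{G'}(x,y)|\le(1+\frac{\pi}{2})\,|\ssp_G(x,y)|$ for all $x,y$. When $uw$ is discarded, $w$ lies in some cone of $u$ whose retained representative, together with the angularly sorted chain of cone-neighbors, forms a convex polygonal path from $u$ to $w$; the hard part will be bounding the length of this chain. I would prove the geometric lemma that a monotone convex chain confined to a cone and spanning a segment of length $|uw|$ has length at most $(1+\frac{\pi}{2})|uw|$, the $\frac{\pi}{2}$ factor arising as the worst-case ratio of a circular arc to its chord (a chain approximating a semicircle). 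Establishing this arc-length bound while respecting the cone angles dictated by the predecessors—and checking that it remains compatible with the one-representative-per-cone rule used for the degree bound—is the delicate step on which the whole theorem rests.
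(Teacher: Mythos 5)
First, a point of comparison that matters: the paper contains no proof of Thm.~\ref{thm:oyao} at all --- the statement is imported from \cite{WangLi03}, with the citation built into the theorem, and the only place the paper argues anything related is Lem.~\ref{lem:shortedges}, where planarity of \ydel\ is derived from \emph{Delaunay} structure, not from planarity in the abstract. This is exactly where your plan breaks. Both of the lemmas you defer (and correctly flag as the crux) are false at the level of generality you work in: abstract planarity plus $5$-degeneracy cannot prove the theorem, because read literally for arbitrary plane graphs the statement fails. For planarity, ``the chord $w_iw_{i+1}$ can be drawn inside that face'' is a category error here: $G'$ is a geometric graph whose edges are straight segments (that is what planarity means throughout this literature and what the routing applications require), so nothing may be re-drawn. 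The straight chord joining consecutive cone-neighbors can cross an edge of $G$ even when $G$ is plane. Concretely, take $u=(0,0)$; $s_1$ at distance $10$, angle $0^\circ$; $s_2$ at distance $10.1$, angle $30^\circ$; a vertex $x$ at distance $1$, angle $15^\circ$, \emph{not} adjacent to $u$; and $y$ far out at angle $\approx 16^\circ$, beyond segment $s_1s_2$; let the edge set be $\{us_1,\, us_2,\, s_2y,\, xy\}$. This is a connected plane straight-line graph, and with \id s chosen so that $u$ is peeled last (hence processed first), \oyao\ puts $s_1,s_2$ in one cone of $u$, adds the chord $s_1s_2$, and separately retains $xy$; the segments $s_1s_2$ and $xy$ cross, so $G'$ is not plane.

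The stretch step fails for the same underlying reason: nothing in a general plane graph forces the chain of cone-neighbors to be convex or monotone, so no arc-versus-chord lemma is available. Take the plane star whose cone-neighbors, in angular order (the ordering used in \cite{WangLi03}), alternate between distance $1$ and distance $100$ at angles $0^\circ,1^\circ,2^\circ,3^\circ$: $u$ is processed first, keeps only $us_1$, and the chain zigzags, giving $|\ssp_{G'}(u,s_3)|\approx 199\,|us_3|$, far above $1+\frac{\pi}{2}$. What rescues the theorem in \cite{WangLi03} is that their input is a (localized) Delaunay triangulation: there, consecutive angular neighbors of $u$ span empty triangles with $u$, so every chord \oyao\ adds closes an empty triangle whose other two sides are Delaunay edges --- precisely the property the present paper invokes in Lem.~\ref{lem:shortedges} --- and the same emptiness is what controls the chain geometry and yields the $1+\frac{\pi}{2}$ bound. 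So the hypothesis ``$G$ is a planar graph'' in Thm.~\ref{thm:oyao} has to be read as shorthand for the Delaunay-type graphs it is applied to, and a correct proof must use that structure; it cannot be completed from planarity and degeneracy alone. (Your $5$-degeneracy argument is fine, and the degree bound is salvageable, but the accounting is $15+10$ --- at most $3$ edges charged to $u$ by each of its at most $5$ predecessors, plus at most one edge per cone when $u$ is processed, of which there are at most $10$ --- rather than ``$5\times 5$''.)
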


\begin{table}[htpb]
\begin{center}
\fbox{
\begin{tabular}{lc}
\multicolumn{2}{c}
{\begin{minipage}[ht]{0.8\linewidth}
\centerline{Algorithm \oyao($G=(V,E)$)~\cite{WangLi03}}
\vspace{1mm}{\hrule width\linewidth}\vspace{2mm} 
\end{minipage}}
\\
{\begin{minipage}[ht]{0.45\linewidth}
\vspace{-5em}\small{\begin{tabbing}
.......\=......\=.......\=.........\=..........\=.....................\=..............................\kill
{\bf \{1. Find an order $\pi$ for $V$:\}} \\
\> Initialize $i = 1$ and $G_i = G$. \\
\> Repeat for $i = 1, 2, \ldots, |V|$ \\
\> \> Remove from $G_i$ the node $u$ of smallest degree \\
\> \> (break ties by smallest \id.) \\
\> \> Call the remaining graph $G_{i+1}$. \\
\> \> Set $\pi_u = n - i + 1$.
\end{tabbing}}
\end{minipage}}
 &
\includegraphics[width=0.22\linewidth]{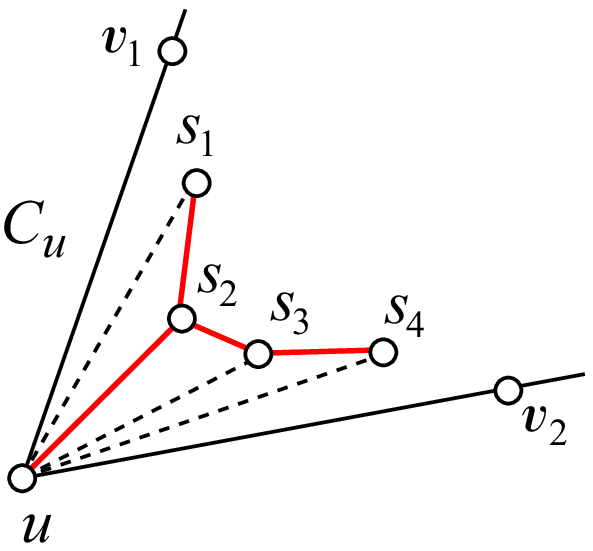} \\
\\
\multicolumn{2}{l}{\begin{minipage}[ht]{0.8\linewidth}
\small{\begin{tabbing}
.......\=......\=.......\=.........\=..........\=.....................\=..............................\kill
{\bf \{2. Construct a bounded-degree structure for $G$:\}} \\
\> Mark all nodes in $V$ \emph{unprocessed}. Initialize $E' \leftarrow \emptyset$ and $G' = (V, E')$.\\
\> Repeat $|V|$ times \\
\> \> Let $u$ be the unprocessed node with the smallest order $\pi_u$. \\
\> \> Let $v_1, v_2, \ldots, v_h$ be the be the processed neighbors of $u$ in $G$ ($h \le 5$). \\
\> \> Shoot rays from $u$ through each $v_i$, to define $h$ sectors centered at $u$. \\
\> \> Divide each sector into fewest open cones of degree at most $\pi/3$. \\
\> \> For each such open cone $C_u$ (refer to Fig. above) \\
\> \> \> Let $s_1, s_2, \ldots, s_m$ be the geometrically ordered neighbors of $u$ in $C_u$. \\
\> \> \> Add to $E'$ the shortest $us_i$ edge. \\
\> \> \> Add to $E'$ all edges $s_js_{j+1}$, for $j = 1, 2, \ldots, m-1$. \\
\> \> Mark node $u$ \emph{processed}. \\
\\
{\bf Output $G' = (V, E')$.}
\end{tabbing}}
\end{minipage}}
\end{tabular}
}
\end{center}
\caption{The \oyao\ method.}
\label{tab:orderedyao}
\end{table}

Song et al.~\cite{SWLF04} apply the ordered Yao structure
on top of the Gabriel graph {\tt GG}($V$) to produce
a planar bounded-degree structure {\tt OrdYaoGG}. Their result improves
upon the earlier localized structure
{\tt YaoGG}~\cite{li02sparse}, which may not have bounded degree. Both
{\tt YaoGG} and {\tt OrdYaoGG} are power spanners, 
however neither is a length spanner. 

The first efficient localized method to construct a bounded-degree
planar spanner was proposed by Li and Wang in~\cite{WangLi03,LiWang04}.
Their method applies the ordered Yao structure on top of $\ldel(V)$ to
bound the node degree. The resulted structure, called $\bps(V)$
(Bounded-Degree Planar Spanner), has degree bounded above by
$19 + \lceil \frac{2\pi}{\alpha} \rceil$,
where $0 < \alpha < \frac{\pi}{3}$ is an adjustable parameter.
The total communication complexity for constructing $\bps(V)$ is
$O(n)$ messages, however it may take as many as $O(n)$ rounds of
communication for a node to find its rank in the ordering of $V$
(a trivial example would be $n$ nodes
lined up in increasing order by their \id).
The \bps\ structure does not have low weight \cite{li-localmst-03}.

The first \emph{localized low-weight} planar structure was proposed
in~\cite{li-localmst-03}. This structure, called {\tt RNG'},
is based on a modified relative neighborhood graph, and satisfies the planarity,
bounded-degree and bounded-weight properties.
A similar result has been obtained by Li, Wang and Song~\cite{lws-ieee-04},
who propose a family of structures, called \emph{Localized Minimum Spanning Trees}
$\lmst_k$, for $k \ge 1$. The authors show that $\lmst_k$ is planar, has
maximum degree 6 and total weight within a constant factor of $\w(MST)$,
for $k \ge 2$. Their result extends an earlier result by Li, Hou and
Sha~\cite{lhs-infocom-03}, who propose a localized MST-based method to
compute a local minimum spanning tree structure.
However, neither of these low-weight structures satisfies the spanner property.
Constructing low-weight, low-degree planar spanners in few rounds of
communication is one of the open problems we resolve in this paper.

\section{Our Work}
We start with a few definitions and notation to be used through
the rest of the paper. For any nodes $u$ and $v$, let $uv$ denote the
edge with endpoints $u$ and $v$;
$\overrightarrow{uv}$ is the edge directed from $u$ to $v$; and
$|uv|$ denotes the Euclidean distance between $u$ and $v$.
Let $\C_u$ denote an arbitrary cone with
apex $u$, and let $\C_u(v)$ denote the cone with apex $u$ containing $v$.
For any edge set $E$ and any cone $\C_u$, let $E \cap \C_u$ denote the
subset of edges in $E$ incident to $u$ that lie in $\C_u$.

We assume that each node $u$ has a unique identifier \id($u$) and knows
its coordinates $(x_u, y_u)$.
Define the identifier $\id(\overrightarrow{uv})$ of a directed edge
$\overrightarrow{uv}$ to be the triplet $(|uv|, \id(u), \id(v))$.
For any pair of directed edges $\overrightarrow{uv}$ and
$\overrightarrow{u'v'}$, we say that $\id(\overrightarrow{uv}) <
\id(\overrightarrow{u'v'})$ if and only if one of the following
conditions holds: (1) $|uv| < |u'v'|$, or (2)
$|uv| = |u'v'|$ and $\id(u) < \id(u')$, or (3)
$|uv| = |u'v'|$ and $\id(u) = \id(u')$ and $\id(v) < \id(v')$.
For an undirected edge $uv$, define $\id(uv) =
\min\{\id(\overrightarrow{uv}), \id(\overrightarrow{vu})\}$. Note that according
to this definition, each edge has a unique identifier.

Let $H = (V, E_H)$ be an arbitrary subgraph of $G = (V, E)$. A subset
$\N_u \subset V$ is an $r$-\emph{cluster} in $H$ with center $u$ if, for
any $v \in \N_u$, $|\ssp_H(u, v)| \le r$. A set of disjoint
$r$-clusters $\{\N_{u_1}, \N_{u_2}, \ldots\}$ form an $r$-\emph{cluster cover} for
$V$ in $H$ if they satisfy two properties: (i) for $i \neq j$, $|\ssp_H(u_i,
u_j)| > r$ (the $r$-\emph{packing} property), and (ii) the union
$\cup_{i}\N_{u_i}$ covers $V$ (the $r$-\emph{covering} property).

For any node subset $U \subseteq V$, let $G[U]$ denote the subgraph of $G$ induced by $U$.
A set of node subsets $V_1, V_2, \ldots \subseteq V$ is a \emph{clique cover} for $V$ if
the subgraph of $G[V_i]$ is a clique for each
$i$, and $\cup_{i=1}^h V_i = V$.

The \emph{aspect ratio} of an edge set $E$ is the ratio of the
length of a longest edge in $E$ to the length of a shortest edge in
$E$. The aspect ratio of a graph is defined as the aspect ratio of
its edge set.

\subsection{The \alg\ Algorithm}
\label{sec:alg1}
%
%
In this section we describe an algorithm called \alg(Localized Optimal Spanner)
that takes as input an $\alpha$-QUDG $G = (V, E)$, for fixed $0 < \alpha \le 1$, and a value
$\e > 0$, and computes a $(1+\e)$-spanner for $G$ of maximum degree $O(1)$ and
total weight $O(\w(MST))$.
The main idea of our algorithm is to
compute a particular clique cover $V_1, V_2, \ldots$ for $V$, construct a
$(1+\e)$-spanner for each $G[V_i]$, then connect these smaller spanners
into a $(1+\e)$-spanner for $G$ using selected Yao edges.
In the following we discuss the details of our algorithm.

\begin{figure}[htbp]
\centering
\begin{tabular}{c@{\hspace{0.02\linewidth}}c@{\hspace{0.02\linewidth}}c}
\includegraphics[width=0.32\linewidth]{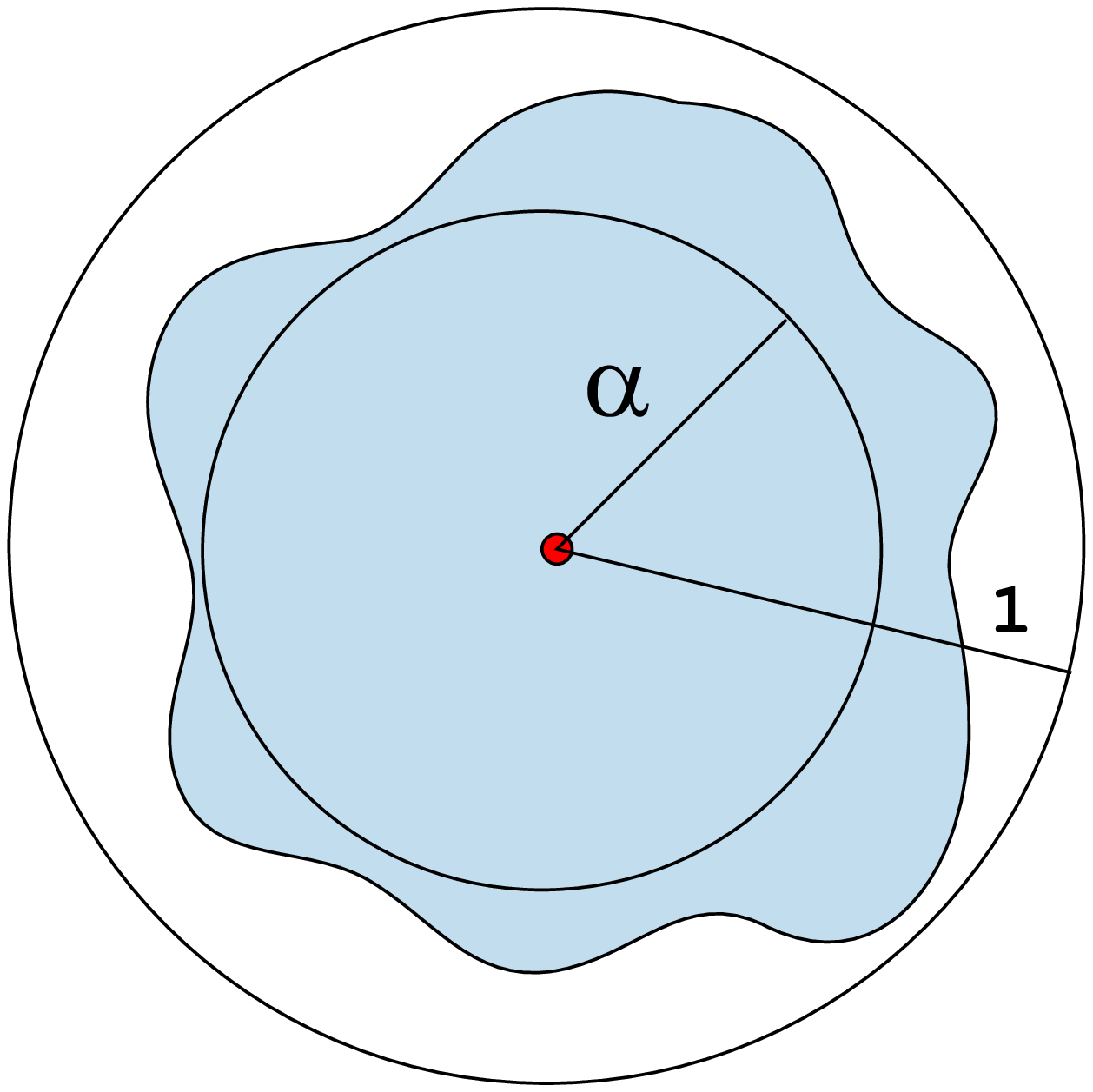} &
\includegraphics[width=0.32\linewidth]{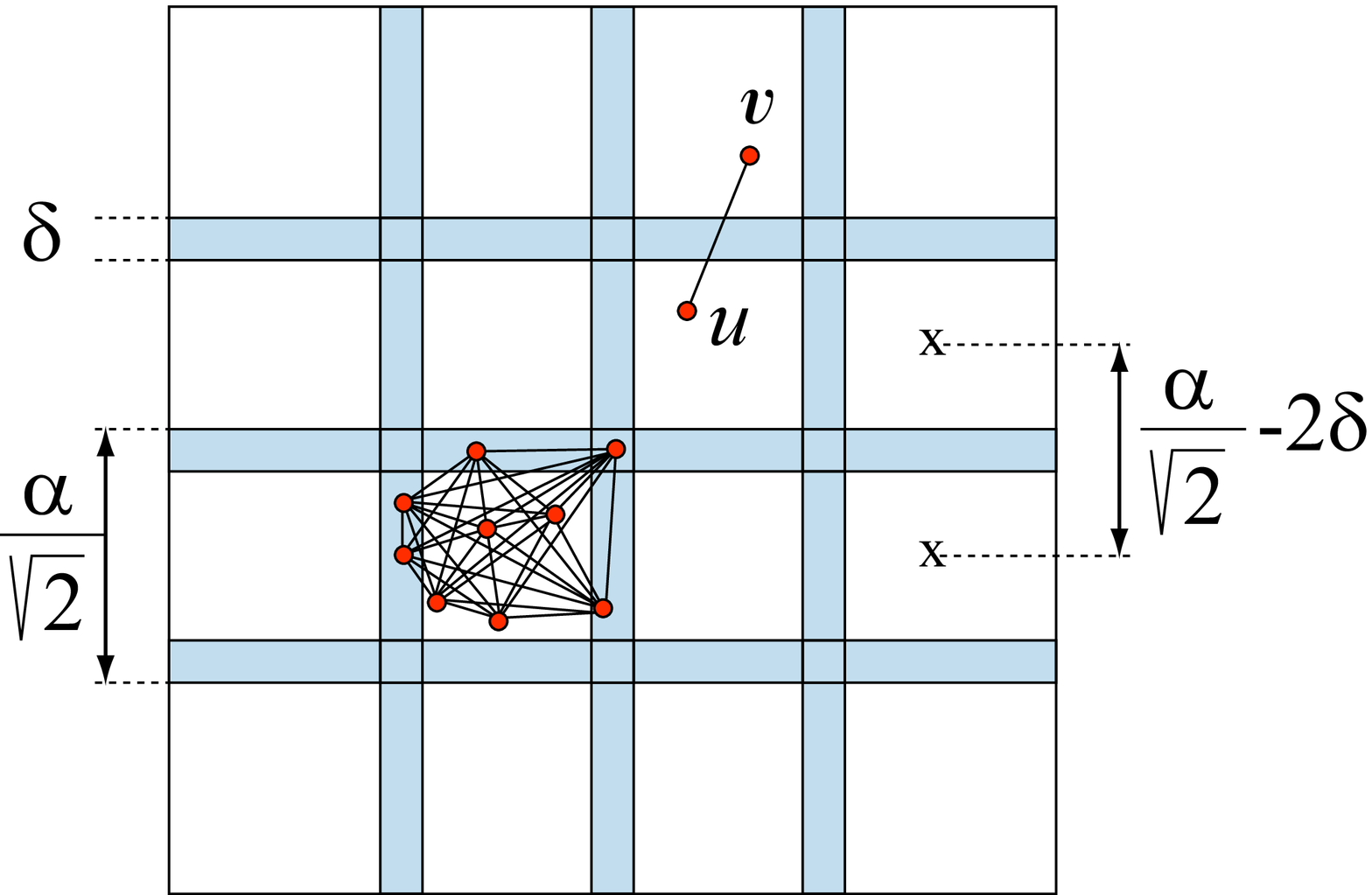} &
\includegraphics[width=0.21\linewidth]{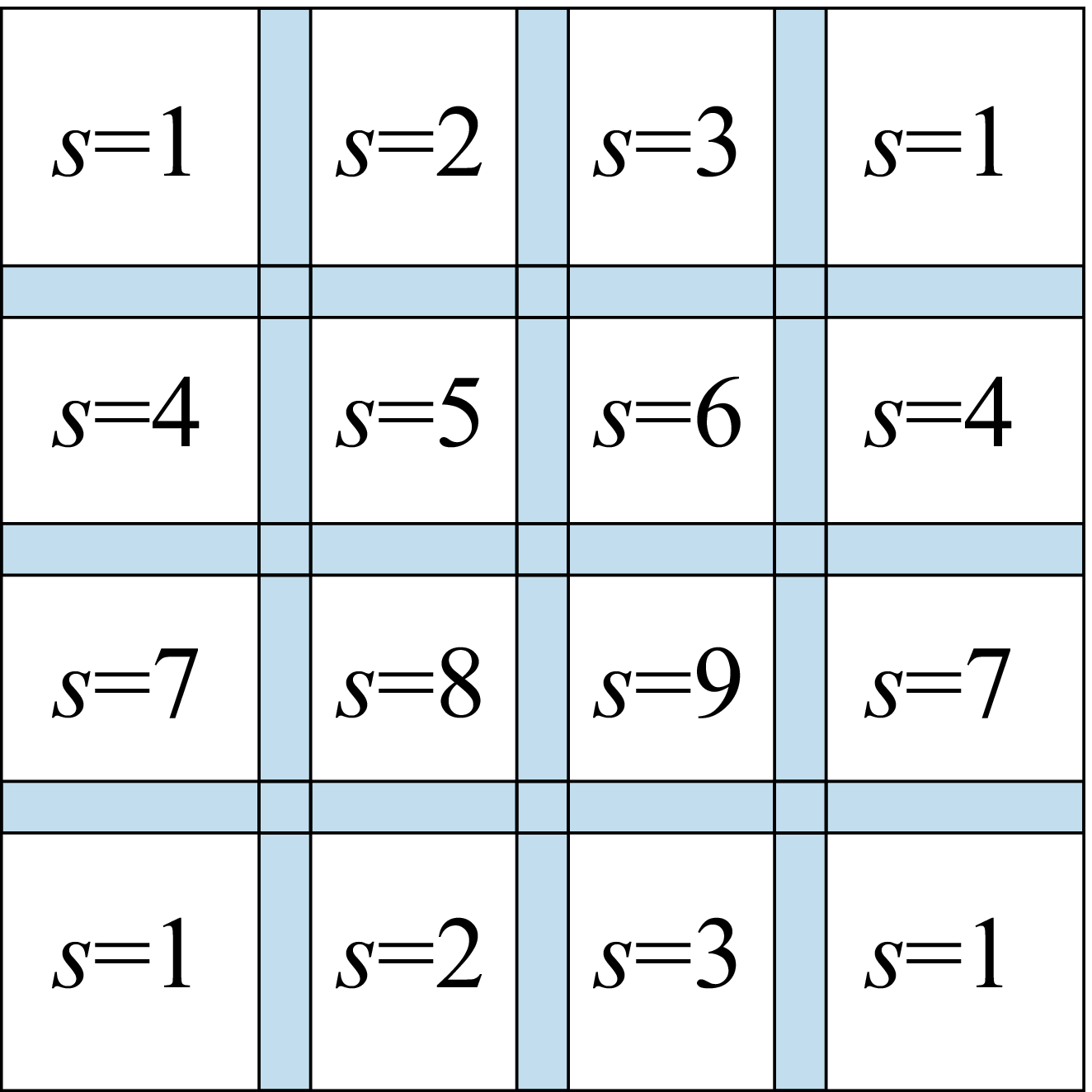} \\
(a) & (b) & (c)
\end{tabular}
\caption{(a) The $\alpha$-QUDG model (b) Constructing a clique cover for $V$ (c) Clique ordering.}
\label{fig:grid}
\end{figure}

Let $0 < \beta < \frac{\alpha}{\sqrt{2}}$ and
$0 < \delta < \beta/4$ be small constants to be fixed later.
To compute a clique cover for $V$, we
start by covering the plane with a grid of overlapping square cells of size
$\beta \times \beta$, such that the
distance between centers of adjacent cells is
$\beta - 2\delta$. Note that any two adjacent
cells define a small band of width $\delta$ where they overlap. The
reason for enforcing this overlap is to ensure that edges
not entirely contained within a single grid cell are longer
than $\delta$, i.e., they cannot be arbitrarily small.
We identify each grid cell by the coordinates $(i, j)$ of its upper left corner.
Any two vertices that lie within the same grid cell are no more than
$\alpha$ distance apart and therefore are connected by an edge in $G$.
This implies that the collection of vertices in each non-empty grid
cell can be used to define a clique element of the clique cover.
We call this
particular clique cover a ($\beta,\delta$)-\emph{clique cover}.
Let $V_1, V_2, \ldots $ be the elements of the ($\beta,\delta)$-clique cover for $V$.
Note that, since $\delta < \beta/4$, a node $u$ can
belong to at most four subsets $V_i$.

Our \alg\ method consists of 4 steps. First we construct, for each
$G[V_i]$, a $(1+\e)$-spanner  of degree $O(1)$ and weight
$O(\w(MST(V_i))$.
Various methods for constructing
$H_i$ exist -- for instance, the well-known sequential greedy
method produces a spanner with the desired properties~\cite{DN97}.
Second, we use the
Yao method to generate $(1+\e)$-spanner paths between longer edges that
span different grid cells. Third, we apply the reverse Yao step
to reduce the number of Yao edges incident to each node.
Finally, we apply a filtering method to eliminate all but a
constant number of edges incident to a grid cell. This fourth step
is necessary to ensure that the output spanner has bounded weight.
These steps are described in detail in Table~\ref{tab:alg}.
\begin{table}[ht]
\begin{center}
\fbox{
\begin{minipage}[h]{0.8\linewidth}
\centerline{Algorithm \alg($G=(V,E), \e$)}
\vspace{1mm}{\hrule width\linewidth}\vspace{2mm} 
\small{\begin{tabbing}
.......\=......\=.......\=.......\=..................................................\kill
{\bf \{1. Compute a $(1+\e)$-spanner cover:\}} \\
\> Fix $0 < \beta < \frac{\alpha}{\sqrt{2}}$ and $0 < \delta < \beta/4$. \\
\> Compute a ($\beta,\delta$)-clique cover $V_1, V_2, \ldots$ for $V$. \\
\> For each $i$, compute a $(1+\e)$-spanner $H_i$ for $G[V_i]$ using the method from~\cite{DN97}. \\
\> Initialize $H = \cup_i H_i$. Let \mbox{$E_0 = \{ uv \in E ~|~ uv \not\in G[V_i]$ for any $i$\}.} \\
\\
{\bf \{2. Apply Yao on $E_0$:\}} \\
\> Let $k$ be the smallest integer satisfying $\cos\frac{2\pi}{k} - \sin\frac{2\pi}{k} \ge \frac{\delta+1+\e}{(\delta+1)(1+\e)}$.\\
\> For each node $u$, divide the plane into $k$ incident equal-size cones. \\
\> Initialize $E_Y \leftarrow \emptyset$. \\
\> For each cone $\C_u$ such that $E_0 \cap \C_u$ is non-empty \\
\> \> Pick the edge $uv \in E_0 \cap \C_u$ of smallest \id\ and
add $\overrightarrow{uv}$ to $E_Y$.\\
\\
{\bf \{3. Apply reverse Yao on $E_Y$:\}} \\ 
   \> Initialize $E_{YY} \leftarrow E_{Y}$. \\
   \> For each cone $\C_u$ such that $E_{Y} \cap \C_u$ is non-empty \\
   \> \> Discard from $E_Y$ all edges $\overrightarrow{vu} \in E_Y \cap \C_u$, but the one of smallest \id.\\
\\
{\bf \{4. Select connecting edges from $E_{YY}$:\}} \\
\> Pick $r$ such that $r \le \frac{(\delta+1)(1+\e)(\cos\theta-\sin\theta)-(\delta+1+\e)}{4}$, where  $\theta  =2\pi/k$. \\
\> Compute an $r$-cluster cover for $V$ in $H$. \\
\> Let $E_1 \subseteq E_{YY}$ contain all Yao edges connecting cluster centers. Add $E_1$ to $H$. \\
\\
{\bf Output $H = (V, E_H)$.}
\end{tabbing}}
\end{minipage}
}
\end{center}
\caption{The \alg\ algorithm.}
\label{tab:alg}
\end{table}
Note that the Yao and reverse Yao steps are restricted to edges
in the set $E_0$ whose aspect ratio is bounded above by $1/\delta$.
The next three theorems prove the main properties of the \alg\ algorithm.

\begin{theorem}
The output $H$ generated by \alg($G, \e$) is a $(1+\e)$-spanner for $G$.
\label{thm:spanner}
\end{theorem}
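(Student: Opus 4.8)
The plan is to reduce the statement to a single-edge claim and then treat intra- and inter-clique edges separately. Concretely, I will show that for every edge $uv \in E$ there is a path in $H$ with $|\ssp_H(u,v)| \le (1+\e)|uv|$. This suffices: for any pair $u,v$, replacing each edge of a shortest $G$-path by its $H$-path and concatenating gives $|\ssp_H(u,v)| \le (1+\e)\sum_i |x_i x_{i+1}| = (1+\e)|\ssp_G(u,v)|$, which is exactly the $(1+\e)$-spanner condition.

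For an edge $uv$ lying inside a clique, i.e. $uv \in G[V_i]$, the claim is immediate: since $H_i \subseteq H$ is a $(1+\e)$-spanner for $G[V_i]$, we get $|\ssp_H(u,v)| \le |\ssp_{H_i}(u,v)| \le (1+\e)|uv|$. The substance of the proof is the remaining case $uv \in E_0$, where the grid-overlap construction guarantees $|uv| > \delta$ (and $|uv| \le 1$), so these edges have aspect ratio at most $1/\delta$.

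For $uv \in E_0$ my plan is to route through cluster centers. Let $c_u, c_v$ be the centers of the $r$-clusters containing $u$ and $v$; by definition $|\ssp_H(u,c_u)| \le r$ and $|\ssp_H(v,c_v)| \le r$, and in particular the Euclidean distances $|u c_u|$ and $|v c_v|$ are at most $r$. I will connect $c_u$ to $c_v$ by a chain of center-to-center Yao edges of $E_1$ directed toward $v$, obtained by iterating the Yao selection of Step 2: each step moves from the current center $c$ to the center $c'$ selected in the cone $\C_c(v)$, so that $c'$ lies within angle $\theta = 2\pi/k$ of the direction from $c$ to $v$ and $|cc'|$ does not exceed the distance from $c$ to the reference point. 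The key geometric input is the Yao inequality $|c'v| \le |cv| - (\cos\theta - \sin\theta)|cc'|$, obtained by projecting $c'$ onto the line $cv$. Summing this along the chain telescopes to $\sum |cc'| \le |c_u v|/(\cos\theta - \sin\theta) \le (|uv| + r)/(\cos\theta - \sin\theta)$, so the assembled path has length at most $2r + (|uv|+r)/(\cos\theta - \sin\theta)$. The inequality defining $k$, namely $\cos\theta - \sin\theta \ge \frac{\delta+1+\e}{(\delta+1)(1+\e)}$, together with the bound on $r$, is engineered precisely so that, using $|uv| > \delta$, this quantity is at most $(1+\e)|uv|$; verifying this arithmetic closes the case.

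The hard part is not the telescoping but establishing that such a chain of center-to-center edges actually survives into $E_1$. Steps 3 and 4 discard most Yao edges: reverse Yao keeps only one incoming edge per cone, and the filtering retains only edges both of whose endpoints are cluster centers. I therefore expect the main obstacle to be a replacement argument showing that whenever the ``ideal'' Yao edge toward $v$ is deleted, a nearby surviving center-to-center edge can be substituted while charging the extra detour (again of order $r$) against the slack created by $\cos\theta - \sin\theta$; this is exactly why the bound on $r$ carries the factor $4$ in its denominator, budgeting for the detours to and from the two centers. A secondary subtlety is that the reference point $v$ in the telescoping need not be a neighbor of the intermediate centers, so the Yao inequality must be applied with $v$ as an external reference rather than as a cone candidate; handling this cleanly, possibly by inducting on the $E_0$-edge length so that shorter residual edges are already known to be well-spanned, is where I would spend most of the effort.
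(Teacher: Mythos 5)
You reduce correctly to a per-edge claim, and your treatment of intra-clique edges matches the paper's. But for $uv \in E_0$ your argument rests on a chain of center-to-center Yao edges from $c_u$ toward $v$ that ``survives into $E_1$,'' and you yourself flag that establishing this survival is the hard part and leave it as a hoped-for ``replacement argument.'' That is precisely where the proof lies, and the chain you describe does not exist in general: step 2 selects, at each \emph{node}, the smallest-\id\ edge of $E_0$ in each cone --- it never selects center-to-center edges --- and at an intermediate center $c'$ the cone $\C_{c'}(v)$ may contain no $E_0$ edge at all ($c'v$ need not be an edge of the QUDG since the adversary controls lengths in $(\alpha,1]$, or $c'$ and $v$ may lie in a common cell so that any such edge belongs to a clique rather than to $E_0$). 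After reverse Yao and the step-4 filter there is no guarantee that any sequence of surviving center-to-center edges makes progress toward $v$, so the telescoping has nothing to telescope over. Also, the factor $4$ in the bound on $r$ is not a budget for a replacement argument; in the paper it pays, in a \emph{single} hop, for the two cluster detours ($u_1 \to a$ and $b \to v_1$, each at most $r$) plus the increase $|ab| < |u_1v_1| + 2r$.

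The paper's proof avoids the survival problem entirely by inducting on edge \id\ (essentially on length): for $uv \in E_0$, the Yao edge $uv_1 \in \C_u(v)$ exists because $uv$ itself is a candidate in that cone; the reverse-Yao edge $u_1v_1$ at $v_1$ exists because $uv_1$ is a candidate in $v_1$'s cone containing $u$; and the single cluster-center edge $ab$ with $u_1 \in \N_a$, $v_1 \in \N_b$ provides the one ``long hop.'' The projection inequalities you cite are then used not to telescope but to show the two residual segments $uu_1$ and $v_1v$ are strictly shorter than $uv$, so the inductive hypothesis supplies $(1+\e)$-paths for them, and the constraints tying $k$ and $r$ to $\delta$ and $\e$ absorb all additive losses. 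Your closing aside --- ``possibly by inducting on the $E_0$-edge length so that shorter residual edges are already known to be well-spanned'' --- is in fact the entire mechanism, not a secondary subtlety; unless it is promoted to the main structure (one Yao hop, one reverse-Yao hop, one center edge, recursion on the two residuals), your argument does not go through.
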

\begin{proof}
Let $uv \in E$ be arbitrary. If $uv \in G[V_i]$ for some $i$, then $H_i \subseteq H$ contains
a $(1+\e)$-spanner $uv$-path (since $H_i$ is a $(1+\e)$-spanner for $G[V_i]$). Otherwise, $uv \in E_0$.
The proof that $H$ contains a $(1+\e)$-spanner $uv$-path is by induction on the \id\ of
edges in $E_0$. Let $uv \in E_0$ be the edge with the smallest \id\ and assume without loss of generality
that $\id(uv) = \id(\overrightarrow{uv})$. Since $\id(uv)$ is
smallest, $\overrightarrow{uv}$ gets added to $E_Y$ in step 2, and it stays in
$E_{YY}$ in step 3. If $uv \in H$ at the end of step 4, then $\ssp_H(u,v) = uv$.
Otherwise, let $ab$ be the edge selected in step 4 of the algorithm,
such that $u \in \N_{a}$ and $v \in \N_{b}$ (see Fig.~\ref{fig:spanner}a).
Since $\N_a$ and $\N_b$ are both $r$-clusters, we have that
$|\ssp_{H}(u, a)| \le r$ and $|\ssp_H(v, b)| \le r$. It follows that $|ua| \le r$ and $|vb| \le r$.
By the triangle
inequality, $|ab| < |uv| + 2r$ and therefore $\ssp_H(u,v) \le |ab| + 2r < |uv| + 4r \le (1+\e)|uv|$, for
any  $r \le \delta\e/4$ (satisfied by the $r$ values restricted by the algorithm).
This concludes the base case.

To prove the inductive step, let $uv \in E_0$ be arbitrary, and assume that $H$ contains
$(1+\e)$-spanner paths between the endpoints of any edge whose \id\ is lower than \id($uv$).

\begin{figure}[ht]
\centering
\begin{tabular}{c@{\hspace{0.1\linewidth}}c}
\includegraphics[width=0.35\linewidth]{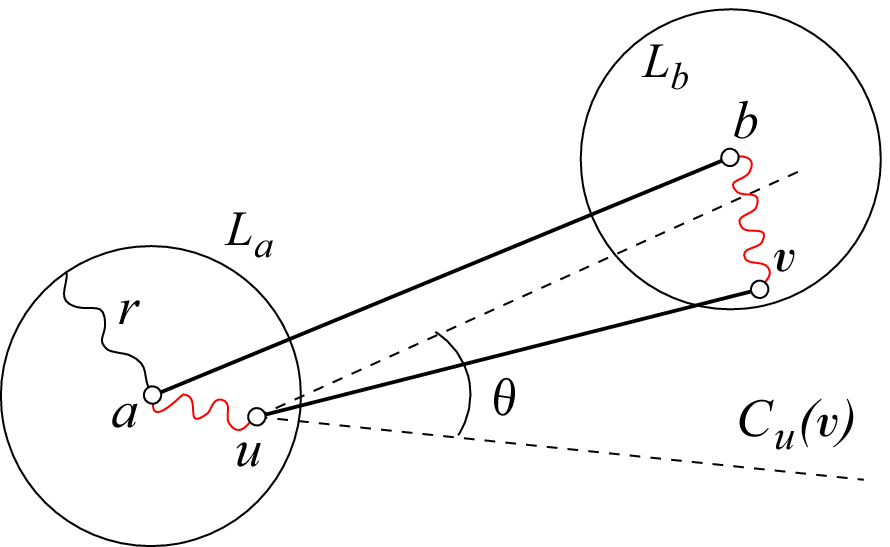} &
\includegraphics[width=0.35\linewidth]{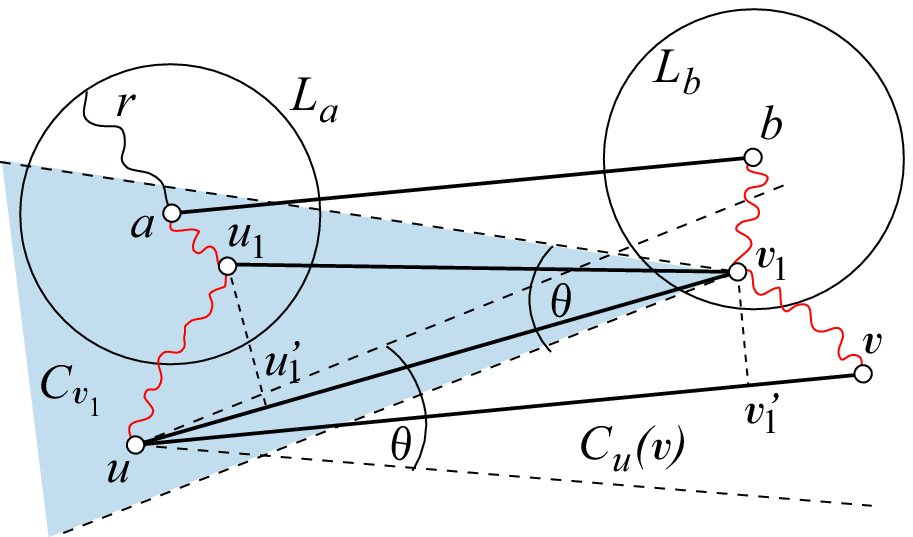} \\
(a) & (b)
\end{tabular}
\caption{Thm.~\ref{thm:spanner}: (a) Base case. (b)
$\ssp_H(u,u_1) \oplus \ssp_H(u_1, a) \oplus ab \oplus \ssp_H(b, v_1) \oplus \ssp_H(v_1, v)$ is a $(1+\e)$-spanner $uv$-path.}
\label{fig:spanner}
\end{figure}

Let $uv_1 \in \C_u(v)$ be the Yao edge selected in step 2 of the algorithm;
let $u_1v_1 \in C_{v_1}(u)$ be the YaoYao edge selected in step 3 of the algorithm;
and let $ab \in H$ be the edge added to $H$ in step 4 of the algorithm, such that
$u_1 \in \N_a$ and $v_1 \in \N_b$ (see Fig.~\ref{fig:spanner}b).
Note that $u$ and $u_1$ may be disjoint or may coincide, and similarly
for $v$ and $v_1$. In either case, the chains of inequalities
$\id(u_1v_1) \le \id(uv_1) \le \id(uv)$ and $|u_1v_1| \le |uv_1| \le |uv|$ hold.
Let $u'_1$ be the projection of $u_1$ on $uv_1$.
By the triangle inequality,
\begin{equation}
|uu_1| \le |uu'_1|+|u'_1u_1| = |uv_1| - |u'_1v_1| + |u'_1u_1| \le
|uv_1| - |u_1v_1|\cos\theta + |u_1v_1|\sin\theta. \label{eq:q1}
\end{equation}
Similarly, if $v'_1$ is the projection of $v_1$ on $uv$, we have
\begin{equation}
|v_1v| \le |vv'_1|+|v'_1v_1| = |uv| - |uv'_1| + |v'_1v_1| \le |uv| -
|uv_1|\cos\theta + |uv_1|\sin\theta. \label{eq:q2}
\end{equation}
Since $|uu_1| < |uv_1| \le |uv|$ and $|v_1v| < |uv|$, by the inductive
hypothesis $H$ contains $(1+\e)$-spanner paths $\ssp_H(u,u_1)$ and $\ssp_H(v_1,v)$.
Let $P_1 = \ssp_H(u,u_1) \oplus \ssp_H(v_1, v)$.
The length of $P_1$ is
$$
|P_1| \le (1+\e)\cdot(|uu_1| + |v_1v|).
$$
Substituting inequalities~(\ref{eq:q1}) and~(\ref{eq:q2}) yields
\begin{equation}
|P_1| \le (1+\e)|uv| + (1+\e)|uv_1|(1-\cos\theta +\sin\theta) - (1+\e)|u_1v_1|(\cos\theta  - \sin\theta).
\label{eq:q3}
\end{equation}
Next we show that the path $P = P_1 \oplus \ssp_H(u_1, a) \oplus ab \oplus \ssp_H(b, v_1)$ 
is a $(1+\e)$-spanner path from $u$ to $v$ in $H$, thus proving the inductive step. Using the fact that $|ab| < 2r + |u_1v_1|$, $|\ssp_H(u_1,a)| \le r$ and $|\ssp_H(b, v_1)| \le r$, we get
%
\begin{equation}
|P| \le |P_1| + |u_1v_1| + 4r.
\label{eq:q4}
\end{equation}
Substituting further $|u_1v_1| \ge \delta$ and $|uv_1| \le 1$ in~(\ref{eq:q3}) and~(\ref{eq:q4}) yields
$$
|P| \le (1+\e) |uv| + (4r + (1+\e)(1-\cos\theta +\sin\theta) - \delta(1+\e)(\cos\theta  - \sin\theta)-\delta).
$$
Note that the second term on the right side of the inequality above is
non-positive for any $r$ and $\theta$ satisfying the conditions of the algorithm:
\[
\begin{cases}
r \le \frac{(\delta+1)(1+\e)(\cos\theta - \sin\theta) - (\delta+1+\e)}{4} \\
\cos\theta - \sin\theta > \frac{\delta+1+\e}{(\delta+1)(1+\e)}.
\end{cases}
\]
This completes the proof. \hfill\ABox
\end{proof}

Before proving the other two properties of $H$ (bounded degree and bounded weight), we
introduce an intermediate lemma. For fixed $c > 0$, call an edge set
$F$ \emph{$c$-isolated} if,
for each node $u$ incident to an edge $e \in F$, the closed disk $\disk(u,c)$ centered at
$u$ of radius $c$ contains no other endpoints of edges in $F$. This definition is a
variant of the \emph{isolation property} introduced in~\cite{Das95}. Das et al. show
that, if an edge set $F$ satisfies the isolation property, then $\w(F)$ is within a
constant factor of the minimum spanning tree connecting the endpoints of $F$. Here we
prove a similar result.

\begin{lemma}
Let $F$ be a $c$-isolated set of edges no longer than 1.
Then $\w(F) = O(1)\cdot\w(T)$, where $T$ is the
minimum spanning tree connecting the endpoints of edges in $F$.
\label{lem:isolated}
\end{lemma}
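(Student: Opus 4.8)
The plan is to combine a crude upper bound on $\w(F)$ with a crude lower bound on $\w(T)$, both squeezed out of the isolation property. Let $P$ denote the set of all endpoints of edges in $F$. The central geometric fact I would record first is that any two distinct points $u,w \in P$ satisfy $|uw| > c$: otherwise $w$ would be an endpoint lying in the closed disk $\disk(u,c)$, contradicting $c$-isolation at $u$. Combined with the hypothesis that every edge has length at most $1$, this separation yields a degree bound by a packing argument. The $F$-neighbors of a fixed $u \in P$ all lie in $\disk(u,1)$ and are pairwise more than $c$ apart, so the radius-$(c/2)$ disks centered at them are pairwise disjoint and contained in $\disk(u,1+c/2)$; hence $u$ is incident to at most $d := (1+c/2)^2/(c/2)^2 = O(1/c^2)$ edges of $F$, and summing over $P$ gives $|F| \le \tfrac{d}{2}|P|$.

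Next I would assemble the two estimates. Since every edge of $F$ has length at most $1$, we get $\w(F) = \sum_{e \in F}|e| \le |F| \le \tfrac{d}{2}|P|$. On the other hand, $T$ is a spanning tree of the point set $P$ and therefore has exactly $|P|-1$ edges, each joining two distinct points of $P$ and hence of length more than $c$; thus $\w(T) > c(|P|-1)$. For $|P| \ge 2$ we have $|P| \le 2(|P|-1)$, so $\w(F) \le d(|P|-1) < \tfrac{d}{c}\,\w(T) = O(1/c^3)\cdot\w(T)$, which is $O(1)\cdot\w(T)$ because $c$ is a fixed constant; the cases $|P| \le 1$ are trivial, as then $F$ is empty.

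I do not expect a genuine obstacle here. The only points needing care are the packing estimate behind the degree bound and keeping the final constant clean — in particular absorbing the $-1$ in $|P|-1$, which is exactly why the $|P| \ge 2$ case is separated out. Conceptually, the reason this short counting argument suffices, in contrast to the more delicate multi-scale charging used by Das et al.\ for general low-weight spanners, is that $c$-isolation together with the length-at-most-$1$ hypothesis forces every edge of $F$ into the range $(c,1]$; the edge set thus has aspect ratio at most $1/c$, leaving only a constant number of length scales to account for.
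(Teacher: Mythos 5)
Your proof is correct, and it rests on the same underlying counting idea as the paper's: both $\w(F)$ and $\w(T)$ grow linearly in the number $n$ of endpoints, because edges of $F$ have length at most $1$ while $c$-isolation forces all endpoints pairwise more than $c$ apart. The execution, however, differs in two ways. First, the paper lower-bounds $\w(T)$ indirectly: it takes a Hamiltonian path $P$ from a preorder traversal of $T$, weights its edges by tree distances so that $\w(P) \le 2\w(T)$, and then uses $\w(P) \ge (n-1)c$; you instead observe directly that $T$ has $n-1$ edges, each joining two endpoints and hence of length more than $c$, giving $\w(T) > c(n-1)$ with no traversal argument and no lost factor of $2$. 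Second---and this is the more substantive difference---the paper's upper bound ``$\w(F) \le n$'' tacitly assumes $|F| \le n$, which the hypotheses of the lemma do not guarantee: a $c$-isolated edge set need not be a matching, and a single endpoint may carry several edges of $F$. (This is harmless in the paper's application, since each set $E_i$ fed to the lemma is constructed so that no two of its edges share a vertex.) Your packing argument, bounding the $F$-degree of every endpoint by $d = O(1/c^2)$ via disjoint radius-$(c/2)$ disks inside $\disk(u,1+c/2)$, closes exactly this gap and makes the lemma true as stated for arbitrary $c$-isolated sets, at the price of a final constant $O(1/c^3)$ rather than the paper's implicit $O(1/c)$. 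So your route is marginally longer but both more elementary and strictly more general than the paper's.
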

\begin{proof}
Let $P$ be a Hamiltonian path obtained by a taking a preorder traversal of $T$.
If each edge $uv \in P$ gets associated a weight value $\w(uv) = |\ssp_T(u,v)|$,
then it is well-known that $\w(P) \le 2\w(T)$. So in order to prove that $w(F)$
is within a constant factor of $\w(T)$, it suffices to show that $\w(F) = O(\w(P))$.
Since $F$ is $c$-isolated, the distance between any two vertices in $T$ is greater
than $c$ and therefore $w(P) \ge (n-1)c$.
On the other hand, no edge in $F$ is greater than 1 and therefore
$\w(F) \le n$. It follows that $\w(F) = O(\w(P))$. \hfill\ABox
\end{proof}

\begin{theorem}
The output $H$ generated by running \alg($G, t$) has maximum degree $O(1)$ and total weight $O(1) \cdot \w(MST)$.
\label{thm:degreeweight}
\end{theorem}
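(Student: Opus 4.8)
The plan is to split the edges of $H$ into the local spanner edges $\bigcup_i H_i$ and the connecting edges $E_1\subseteq E_{YY}$, and to bound the maximum degree and the total weight of each group separately, so that $\w(H)\le \sum_i\w(H_i)+\w(E_1)$.

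For the degree I would first use that, since $\delta<\beta/4$, every node lies in at most four cells and hence in at most four of the cliques $G[V_i]$; because the greedy construction of~\cite{DN97} returns a spanner $H_i$ of bounded degree, the edges of $\bigcup_i H_i$ contribute only $O(1)$ to each node's degree. For $E_1$ I would use $E_1\subseteq E_{YY}$: Step~2 assigns each node at most one outgoing edge per cone (at most $k$ in total) and Step~3 retains at most one incoming edge per cone, so the degree in $E_{YY}$, and a fortiori in $E_1$, is at most $2k=O(1)$. Summing the two contributions yields maximum degree $O(1)$.

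I would bound the connecting edges first, since Lemma~\ref{lem:isolated} is tailored to them. Every endpoint of an edge of $E_1$ is a cluster center, and by the $r$-packing property distinct centers are at $H$-distance greater than $r$. The key is to convert this graph-distance separation into a Euclidean one. If two centers $u,w$ share a cell, they are joined inside it by a $(1+\e)$-spanner path of $H$, so $r<|\ssp_H(u,w)|\le(1+\e)|uw|$ and hence $|uw|>r/(1+\e)$; if they share no cell, then $|uw|>\delta$ because the grid's overlap forces any two points within distance $\delta$ to lie in a common cell. Since $r\le\delta\e/4$ gives $r/(1+\e)<\delta$, any two distinct centers are more than $c:=r/(1+\e)$ apart, so $E_1$ is $c$-isolated and all its edges have length at most $1$. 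Lemma~\ref{lem:isolated} then gives $\w(E_1)=O(\w(T_1))$, where $T_1$ is the minimum spanning tree of the (subset of) endpoints; a standard preorder-traversal argument bounds $\w(T_1)\le 2\,\w(MST)$, so $\w(E_1)=O(\w(MST))$.

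For the local spanners I would use the weight guarantee $\w(H_i)=O(\w(MST(V_i)))$ of~\cite{DN97} and reduce everything to showing $\sum_i\w(MST(V_i))=O(\w(MST))$. My approach is to fix $T=MST(V)$, restrict it to each cell to obtain the forest $F_i=T[V_i]$ with $c_i$ components, and rebuild a spanning tree of $V_i$ by reconnecting those components with edges inside the cell, each of length at most $\beta\sqrt2$, so that $\w(MST(V_i))\le\w(F_i)+(c_i-1)\beta\sqrt2$. The first term sums cleanly: each edge of $T$ is internal to at most four cells (bounded cell membership), whence $\sum_i\w(F_i)\le 4\,\w(T)$. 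The main obstacle is bounding the total reconnection cost $\beta\sqrt2\sum_i(c_i-1)$. Here I would exploit the overlap construction, which guarantees that any edge not contained in a single cell has length more than $\delta$: a component of $F_i$ can be cut off from the rest of its cell only through such a genuinely long edge of $T$, the number of edges of $T$ longer than $\delta$ is at most $\w(T)/\delta=O(\w(T))$, and each such edge can separate a component in only $O(1)$ cells. Making this separation-to-long-edge charging precise — so that $\sum_i(c_i-1)=O(\w(T)/\beta)$ and the reconnection cost is $O(\w(T))$ — is the delicate step; everything else is bookkeeping with the fixed constants $\beta,\delta,\e$.
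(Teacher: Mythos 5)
Your degree bound and your treatment of the connecting edges $E_1$ are correct and essentially match the paper: the paper uses the same three observations for the degree (bounded-degree spanners from~\cite{DN97}, at most four cliques per node, at most $2k$ Yao edges per node), and it also bounds the weight of the connecting edges via Lemma~\ref{lem:isolated}, arguing that cluster centers are pairwise separated because $t|ua|\ge|\ssp_H(u,a)|>r$ (the paper first splits these edges into matchings, which is unnecessary given that all endpoints are cluster centers; your direct case analysis, sharing a cell versus not, is a fine variant, and your explicit factor-$2$ comparison of the subset MST against $MST(V)$ is more careful than the paper's).

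The gap is exactly in the step you flagged as delicate, and it is not merely delicate: the proposed charging cannot work. Your premise that a component of $F_i=T[V_i]$ can be cut off from the rest of its cell only through an edge of $T$ longer than $\delta$ is false. The overlap property guarantees only that an edge contained in \emph{no} cell is longer than $\delta$; two components of $T[V_i]$ can be joined in $T$ by a path of arbitrarily short edges all contained in a \emph{neighboring} cell, via nodes in the overlap band. Concretely, let $L=[0,\beta]^2$ and let $L'$ be its right neighbor, and place $a=(\beta-t,\tfrac{\beta}{2}+2t)$, $b=(\beta-t,\tfrac{\beta}{2}-2t)$, $c=(\beta+t,\tfrac{\beta}{2})$ for tiny $t>0$. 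Then $a,b\in L\cap L'$, $c\in L'\setminus L$, and since $|ac|=|bc|=2\sqrt{2}\,t<4t=|ab|$, the tree $T$ is $\{ac,cb\}$: cell $L$ sees two components $\{a\},\{b\}$, yet $T$ has no edge longer than $\delta$ and $\w(T)=4\sqrt{2}\,t\to 0$. So $\sum_i(c_i-1)=O(\w(T)/\beta)$ fails, and your flat per-component reconnection charge of $\beta\sqrt{2}$ is the culprit (the true reconnection cost here is $|ab|=4t$, i.e., it must be tied to the lengths of the $T$-paths joining the components, not to the cell diameter; note the final statement $\sum_i\w(MST(V_i))=O(\w(MST))$ is still true in this example). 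The paper sidesteps this interaction between nearby and overlapping cells entirely: it groups cells into nine classes by indices mod $3$, so that distinct cells of one class are separated by a constant distance, and then shows by an exchange (Prim) argument that the per-cell MSTs within one class are edge-disjoint subtrees of that class's MST, giving $\sum_i \w(T_{s_i})\le \w(T_s)=O(\w(MST))$ with no reconnection accounting at all. To salvage your route you would need either this grouping idea, or a charging scheme that charges each reconnection to the actual connecting $T$-paths and then bounds how many cells can charge a given edge of $T$ — substantially more work than the bookkeeping you anticipated.
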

\begin{proof}
The fact that $H$ has maximum degree $O(1)$ follows immediately from three observations:
(a) each spanner $H_i$ constructed in step 1 of the algorithm has degree $O(1)$~\cite{DN97},
(b) a node $u$ belongs to at most four subgraphs $H_i$, and
(c) a node $u$ is incident to a constant number of Yao edges (at most $2k$)~\cite{li02sparse}.

We now prove that the total weight for $H$ is within a constant factor of $\w(MST)$,
which is optimal. The main idea is to partition the edge set $E_H$ into a constant
number of subsets, each of which has low weight.
Consider first the $(1+\e)$-spanners constructed in step 1 of the algorithm.
Each $(1+\e)$-spanner $H_\ell$ corresponds to a grid cell $(i, j)$. Let $F$ denote
the set of edges in $\cup_\ell H_\ell$. Define the edge
set $F_s \subseteq F$ to contain all spanner edges corresponding to those grid cells $(i, j)$
whose indices $i$ and $j$ satisfy the condition $(i \mod 3) \times 3+ j \mod 3 = s$.
Intuitively, if two edges $e_1, e_2 \in F_{s}$ lie in different grid cells, then
those grid cells are separated by at least two other grid cells (see Fig.~\ref{fig:grid}c).
This further implies that the closest endpoints of $e_1$ and $e_2$ are distance $\alpha$
or more apart. Also notice that it takes only 9 subsets $F_1, F_2, \ldots, F_9$ to cover
$F$.

Next we show that $\w(F_s) =O(\w(T_s))$ for each $s = 1,2,\ldots, 9$, where
$T_{s}$ is a minimum spanning tree connecting the endpoints in $F_s$.
To see this, first observe that $F_s$ combines the edges of several low-weight $(1+\e)$-spanners
$H_{s_1}, H_{s_2}, \ldots$ with the property that $\w(H_{s_1}) = O(\w(T_{s_1}))$,
where $T_{s_1}$ is a minimum spanning tree connecting the nodes in $H_{s_1}$. Thus, in order
to prove that $\w(F_s) = O(\w(T_s))$, it suffices to show
$\sum_i \w(T_{s_i}) = O(\w(T_s))$. We will in fact prove that
\[
\sum_i \w(T_{s_i}) \le \w(T_s)
\]
We prove this by showing that, if Prim's algorithm is employed in constructing $T_s$ and $T_{s_i}$, then
$T_{s_i} \subseteq T_s$, for each $i$. Since the trees $T_{s_i}$ are all disjoint (separated by at
least 2 grid cells), the claim follows.
Recall that Prim's algorithm processes edges by increasing length and adds them to
$T_s$ as long as they do not close a cycle. This means that all edges shorter than
$\alpha$ are processed before edges longer than $\alpha$. Let $e \in T_{s_i}$ be
arbitrary. Then $|e| \le \alpha$, since $T_{s_i}$ is restricted to one grid cell only of diameter $\alpha$.
If $e \not\in T_s$, then it must be that $e$ closes a cycle $C$ at the time it gets
processed. Note however that $C$ must lie entirely in the grid cell containing
$T_{s_i}$, since $C$ contains edges no longer than $\alpha$, and all edges with endpoints
in different cells are longer than $\alpha$.
Furthermore, $C$ must contain an edge $e' \not\in T_{s_i}$ such that $|e'| \le |e|$. The
case $|e'| = |e|$ cannot happen if Prim breaks ties in the same manner in both
$T_s$ and $T_{s_i}$, so it must be that $|e'| < |e|$.
But then we could replace $e$ in $T_{s_i}$ by $e'$, resulting in a smaller
spanning tree, a contradiction. It follows that $e \in T_s$ and therefore
$T_{s_i} \subseteq T_s$, for each $i$. This concludes the proof that
$\w(F_s) = O(\w(T_s))$, for each $s$. Since there are at most 9 such sets
$F_s$ that cover $F$ and since $\w(T_s) \le \w(MST)$, we get that $\w(F) = O(\w(MST))$.

It remains to prove that $\w(E_H \setminus F) = O(\w(MST))$.
Let $d \le 2k$ be the maximum number of edges in $E_H \setminus F$ incident to any node in $H$.
Partition the edge set $E_H \setminus F$ into no more than $2d \le 4k$ subsets
$E_1, E_2, \ldots$, such that no two edges in $E_i$ share a vertex, for each $i$.
We now show that $\w(E_i) = O(\w(MST))$, for each $i$. Since there are only a
constant number of sets $E_i$ ($4k$ at most), it follows that $\w(E_H \setminus F) = O(\w(MST))$.
The key observation to proving that $\w(E_i) = O(\w(MST))$ is that any two edges $uv, ab \in E_i$ have
their closest endpoints -- say, $u$ and $a$ --
separated by a distance of at least $r/t$. This is because
$t|ua| \ge |\ssp_H(u,a)| > r$;
the first part of this inequality follows from the spanner property of $H$, and
the second part follows from the fact that $u$ and $a$ are centers of different
$r$-clusters (a property ensured by step 4 of the algorithm). This implies that
$E_i$ is $r/t$-isolated, and by Lem.~\ref{lem:isolated} we have that
$\w(E_i) = O(\w(MST))$.

We have established that $\w(F) = O(\w(MST))$ and $\w(E_H\setminus F)=O(\w(MST))$.
It follows that $w(H) = w(E_H) = O(\w(MST))$ and this completes the proof.
\hfill\ABox
\end{proof}

\begin{theorem}
The \alg\ algorithm can be implemented in $O(1)$ rounds of communication using messages that
are $O(\log n)$ bits each.
\label{thm:communication}
\end{theorem}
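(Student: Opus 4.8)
The plan is to give a round-by-round accounting of the four steps of \alg, showing that each uses $O(1)$ rounds and that every message carries only a constant number of identifiers, coordinates, or flags, each $O(\log n)$ bits. The primitive I would establish first is this: in one round every node broadcasts the pair $(\id(u),(x_u,y_u))$ along each incident edge. Since a grid cell has diameter $\beta\sqrt2<\alpha$, all nodes lying in a common cell are mutually adjacent in $G$, so after this single round every node knows the identifier and coordinates of every other node in each of the (at most four) cells it occupies. This respects the $O(\log n)$-bit constraint because it is one message per incident edge.

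Given this primitive, Steps~1--3 are purely local. In Step~1 each node, holding the full point set of each of its cells, recomputes the greedy spanner $H_i$ of that cell by the deterministic method of~\cite{DN97}; no further communication is needed, and all occupants of a cell obtain the identical edge set. A node labels an incident edge as belonging to $E_0$ exactly when its endpoints share no common cell, which it decides from the coordinates it already holds. Steps~2 and~3 operate only on $E_0$, i.e.\ on $1$-hop neighbours: each node selects its outgoing Yao edges from the cone data it already has, then in one further round announces along each selected edge $\overrightarrow{uv}$ that it was added to $E_Y$, which supplies every node with the incoming edges needed for the reverse-Yao filtering of Step~3. The items exchanged are coordinates and edge identifiers $(|uv|,\id(u),\id(v))$, all $O(\log n)$ bits.

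The crux is Step~4. The $r$-cluster cover is computed in $\cup_i H_i$, where $r$ is a fixed constant; since $|\ssp_H(u,v)|\ge|uv|$, every member of an $r$-cluster lies within Euclidean distance $r$ of its centre, so --- $r$ and the cell size $\beta$ both being constants --- each cluster meets only $O(1)$ grid cells. I would exploit this geometric locality: the cells relevant to a node's cluster status are the $O(1)$ cells within Euclidean distance $O(r)$, and any such cell that is actually joined to $u$'s cell in $\cup_i H_i$ is reached through nodes of the overlap bands, each of which is a $1$-hop neighbour, inside a common cell, of the nodes on either side. Hence a node can assemble, in $O(1)$ rounds, the constant-size patch of $\cup_i H_i$ it needs to run the deterministic \cc\ rule; one final announcement round, in which each centre flags itself to its $G$-neighbours, then lets every endpoint of an $E_{YY}$ edge decide whether both of its endpoints are centres, completing the selection of $E_1$.

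Two points demand care, and I expect the second to be the genuine obstacle. First, a shortest $H$-path of length at most $r$ may nonetheless use many hops (intra-cell spanner edges can be arbitrarily short), so the distances defining the clusters must \emph{not} be computed by hop-by-hop relaxation; the argument must read them off from the locally gathered patch, using only that every vertex of such a path lies within Euclidean distance $r$ of the centre. Second, and more delicate, is the $O(\log n)$-bit bound: relaying an entire neighbouring cell through one shared node would cost a number of rounds proportional to that cell's size. To stay within $O(1)$ rounds one must observe that the $r$-packing property confines the number of cluster centres in any region of Euclidean radius $O(r)$ to a constant, so a node need reconcile its status with only $O(1)$ candidate centres rather than with its whole neighbourhood. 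Establishing that \cc\ can be driven by this bounded centre-to-centre coordination --- so that the traffic relayed across any edge stays $O(\log n)$ bits per round --- is the step on which the $O(1)$-round claim ultimately rests.
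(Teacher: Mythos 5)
Your treatment of Steps 1--3 coincides with the paper's: one round of $(\id(u),x_u,y_u)$ exchange makes each cell's full contents common knowledge among its occupants (who form a clique), after which the intra-cell spanners, the Yao step, and the reverse-Yao step are local computations plus at most one announcement round, all with $O(\log n)$-bit messages. The gap is in Step 4, exactly where you predicted it. Your third paragraph proposes that each node ``assemble, in $O(1)$ rounds, the constant-size patch of $\cup_i H_i$ it needs'' to run \cc; but that patch is \emph{not} constant-size, since a grid cell may contain arbitrarily many nodes, and your own fourth paragraph concedes that relaying a neighbouring cell through a shared node costs rounds proportional to the cell's population. You then gesture at the right repair (by $r$-packing, only $O(1)$ cluster centres exist per constant-radius region, so only centre identities need to cross cell boundaries), but you explicitly leave open how \cc\ can be ``driven by this bounded centre-to-centre coordination,'' calling it the step on which the claim ultimately rests. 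That is precisely the step a proof must supply: without a concrete mechanism, adjacent overlapping cells running the greedy clustering concurrently can simultaneously create two centres within $H$-distance $r$ of each other across an overlap band (violating the packing property), and nodes in the band can be claimed inconsistently by both cells.

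The paper closes this hole with two ideas absent from your proposal. First, no patch-gathering is needed for the intra-cell work: every occupant of a cell $V_\ell$ already holds an identical complete description of $V_\ell$ and of its spanner, so all occupants \emph{replicate} the same deterministic greedy clustering of $V_\ell$ with zero additional communication. Second, cross-cell consistency is obtained by \emph{sequencing} rather than coordination: the cells are partitioned into four classes by the parities of their indices ($(i \bmod 2)\times 2 + (j \bmod 2) = s$), and \cc\ processes the four classes in four successive phases. When a cell of class $s$ is processed, all clusters created in earlier phases in its neighbouring cells (which necessarily belong to other classes) are already fixed; only their \emph{centres} --- a constant number per cell, by packing --- are forwarded in, via steps (E), (A), (B) of \cc, and are then grown (step (C)) before new centres are created greedily (step (D)). Each phase costs two rounds of $O(\log n)$-bit messages, giving eight rounds for Step 4. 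So your proposal correctly isolates the difficulty but does not resolve it, and the mechanism it sketches instead (shipping patches of $H$ across cells) would violate the very message bound the theorem asserts.
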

\begin{proof}
Let $x_u$ and $y_u$ denote the coordinates of a node $u$. At the beginning of
the algorithm, each node $u$ broadcasts the information $(\id(u), x_u, y_u)$
to its neighbors and collects similar information from its neighbors.
Each node $u$ determines the grid cell(s) $(i,j)$ it belongs to from
two conditions, $i\alpha/\sqrt{2} \le  x_u < (i+1)\alpha/\sqrt{2}$
and
$j\alpha/\sqrt{2} \le  y_u < (j+1)\alpha/\sqrt{2}$.
Similarly, for each neighbor $v$ of $u$, each node $u$ determines the grid
cell(s) that $v$ belongs to. Thus step 1 of the algorithm can be implemented
in one round of communication: using the information from its neighbors,
each node $u$ computes the clique corresponding to those cells $(i,j)$ that
$u$ belongs to (at most 4 of them), then $u$ computes a $(1+\e)$-spanner for each
such clique by performing local computations.
Note that knowledge of node coordinates is critical to implementing step 1
efficiently.

Step 2 (the Yao step) and step 3 (the reverse Yao step) of the algorithm
are inherently local: each node $u$ computes its incident Yao and YaoYao
edges based on the information gathered from its neighbors in step 1.

It remains to show that step 4 can also be implemented in $O(1)$ rounds
of communication. We will in fact show that eight rounds of communication
suffice to compute an $r$-cluster cover for $V$ in $H$.
Define $U_s$ to be the set of vertices that lie in the grid cells $(i,j)$ such
that $(i \mod 2)\times 2+ j \mod 2 = s$. This is the same as saying that
two vertices that lie in different cells are about one grid cell apart.
Note that $V = \cup_{s=1}^4 U_s$. To compute an $r$-cluster cover for $V$,
each node $u$ executes the \cc\ method described below. For
simplicity we assume that $r > \delta$, so that two cluster centers that lie in different
grid cells are at least distance $r$ apart. However, the \cc\ method
can be easily extended to handle the situation $r \le \delta$ as well.

\begin{center}
\vspace{1mm}
\fbox{
\begin{minipage}[h]{0.8\linewidth}
\centerline{Computing a \cc($u, r$)}
\vspace{1mm}{\hrule width\linewidth}\vspace{2mm} 
\small{\begin{tabbing}
......\=......\=............\=.......\=..................................................\kill
Repeat for $s = 1,2,3,4$ \\
   \> (A) Collect information on cluster centers from neighbors (if any). \\ 
   \> If $u$ belongs to $U_s$ \\
   \> \> Let $V_\ell \subseteq U_s$ be the clique containing $u$ (computed in step 1 of \alg). \\
   \> \> (B) Broadcast information on existing cluster centers in $V_\ell$ to all nodes in $V_{\ell}$. \\
   \> \> (C) For each existing cluster center $w \in V_{\ell}$ \\
   \> \> \> Add to $C_w$ all uncovered nodes $v \in V_\ell$ such that $\ssp_H(w, v) \le r$. \\
   \> \> \> Mark all nodes in $C_w$ covered. \\
   \> \> (D) While $V_\ell$ contains uncovered nodes \\
   \> \> \> Pick the uncovered node $w \in V_{\ell}$ of highest $\id$. \\
   \> \> \> Add to $C_w$ all uncovered nodes $v \in V_\ell$ such that $\ssp_H(v, w) \le r$. \\
   \> \> \> Mark all nodes in $C_w$  covered. \\
   \> \> (E) Broadcast the cluster centers computed in step (C) to all neighbors.
\end{tabbing}}
\end{minipage}
}
\vspace{2mm}
\end{center}
No information on existing cluster centers is available in the first iteration of the \cc\ method (i.e, for $s = 1$). Each node in $U_1$ skips directly to step (D), which implements the standard greedy method for computed an $r$-clique cover for a given node set ($V_{\ell}$ in our case). In the second iteration, some of the clusters computed during the first iteration might be able to grow to incorporate new vertices from $U_2$. This is particularly true for cluster centers that lie in the overlap area of two neighboring cells. Information on such cluster centers is distributed to all relevant nodes in step (E) in the first iteration, then collected in step (A) and forwarded to all nodes in $V_{\ell}$ in step (B) in the second iteration. This guarantees that all nodes in $V_{\ell}$ have a consistent view of existing cluster centers in $V_\ell$ at the beginning of step (C). Existing clusters grow in step (C), if possible, and new clusters get created in step (D), if necessary. This procedure shows that it takes no more than 8 rounds of communication to implement step 4 of the \alg\ algorithm. One final note is that information on a \emph{constant} number of cluster centers is communicated among neighbors in steps (A), (B) and (D) of the \cc\ method. This is because only a constant number of $r$-clusters can be packed into a grid cell. So each message is $O(\log n)$ bits long,
necessarily so to include a constant number of node identifiers, each of which takes $O(\log n)$ bits.
\hfill\ABox
\end{proof}

\subsection{The \palg\ Algorithm}
\label{sec:alg2}
In this section we impose our spanner to be planar, at the expense of a
bigger stretch factor. This tradeoff is unavoidable, since there are UDGs
that contain no $(1+\e)$-spanner planar subgraphs, for arbitrarily small $\e$
(a simple example would be a square of unit diameter).

Our \palg\ algorithm consists of 4 steps. In a first step we construct
the unit Delaunay triangulation $\udel(V)$ using the method described
in~\cite{LCWW03}. Remaining steps use the grid-based idea from Sec.~\ref{sec:alg1}
to refine the Delaunay structure.
Let $V_1, V_2, \ldots$ be a $(\beta, \delta)$-clique cover for $V$, as
defined in Sec.~\ref{sec:alg1}.
In step 2 of the algorithm we apply the \oyao\ method
on edge subsets of $\udel$ incident to each clique $V_i$. The reason for restricting
this method to each clique, as opposed to the entire spanner $\udel(V)$ as
in~\cite{WangLi03}, is
to reduce the total of $O(n)$ rounds of communication to $O(1)$. The individual
degree of each node increases as a result of this alteration, however it remains
bounded above by a constant.
Steps 3 and 4 aim to reduce the total weight of the spanner. Step 3 uses
a Greedy method to filer out edges with both endpoints in one same clique $V_i$.
Step 4 uses clustering to filter out edges spanning multiple cliques.
These steps are described in detail in Table~\ref{tab:palg}.
\begin{table}[htpb]
\begin{center}
\fbox{
\begin{minipage}[ht]{0.98\linewidth}
\centerline{Algorithm \palg($G=(V,E), \e$)}
\vspace{1mm}{\hrule width\linewidth}\vspace{2mm} 
\small{\begin{tabbing}
.......\=......\=.......\=.........\=..........\=..........\=..............................\kill
{\bf \{1. Start with the localized Delaunay structure for $G$:\}} \\
\> Compute $\ldel = (V, E_{\ldel})$ for $G$ using the method from~\cite{LCWW03}. \\
\> Fix $0 < \beta \le \frac{1}{\sqrt{2}}$ and $0 < \delta < \frac{\beta}{4}$.
 Compute a ($\beta, \delta$)-clique cover $V_1, V_2, \ldots$ for $V$. \\
\\
{\bf \{2. Bound the degree:\}} \\
   \> For each clique $V_i$ do the following: \\
   \> \> 2.1 \> Let $E_i \subseteq E_{\udel}$ contain all unit Delaunay edges incident to nodes in $V_i$. \\
   \> \> 2.2 \> Execute $\ydel_i \leftarrow$ \oyao($G_i = (V, E_i)$) (see Table~\ref{tab:orderedyao}). \\
   \> Set $\ydel = (V, E_{\ydel}) = \bigcup_{i} \ydel_i$. \\
\\
{\bf \{3. Bound the weight of edges confined to single grid cells:\}} \\
\> Initialize $E_H = \emptyset$ and $H = (V, E_H)$. \\
\> Repeat for $k = 1,2,3,4$ \\ 
   \> \> {\bf \{Use Greedy on non-adjacent grid cells:\}} \\
\>   \> For each grid cell $\N=\N(i,j)$ such that $(i \mod 2)\times 2+ j \mod 2 = k$ \\
\>   \> 3.1 \> Let $E_{\N} = E_{\ydel} \cap \N$ contain all edges in \ydel\ that lie in $\N$. \\
\>   \> \> Let $E_{Q} = E_{\ydel} \setminus E_{\N}$ and $Q = (V, E_Q)$ define the query graph for $E_{\N}$. \\
\>   \> 3.2 \> Sort $E_{\N}$ in increasing order by edge \id. \\
\>   \> \> For each edge $e=uv \in E_{\N}$, resolve a shortest path query: \\
\>   \> \> \> If $\ssp_{Q}(u, v) > (1+\e)|uv|$ then add $uv$ to $H$ and $Q$.\\
\>   \> \> \> Otherwise, eliminate $uv$ from \ydel. \\
\\
{\bf \{4. Bound the weight of edges spanning multiple grid cells:\}} \\
\> Pick $r$ such that $r \le \frac{\e\delta}{4}$ and compute an $r$-cluster cover for $\ydel$. \\
\> Add to $H$ those edges in \ydel\ connecting cluster centers. \\
\\
{\bf Output $H = (V, E_H)$.}
\end{tabbing}}
\end{minipage}
}
\end{center}
\caption{The \palg\ algorithm.}
\label{tab:palg}
\end{table}
The reason for breaking up step 3 of the algorithm into 4 different rounds (for $k = 1,\ldots,4$) will become clear later, in our discussion of communication complexity (Thm.~\ref{thm:communication}). We now turn to proving some important properties of the output spanner. We start with a preliminary lemma.

\begin{lemma}
The graph \ydel\ constructed in step 2 of the \palg\ algorithm is a planar $t_1$-spanner for $G$, for any $t_1 > C_{del}(\frac{\pi}{2}+1)$. Furthermore, for each edge $ab \in G$, \ydel\ contains a $t_1$-spanner $ab$-path with all edges shorter than $ab$~\emph{\cite{WangLi03}}.
\label{lem:shortedges}
\end{lemma}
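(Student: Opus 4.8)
The plan is to establish the three assertions — planarity, the stretch bound, and the short-edge refinement — by composing the known stretch of the unit Delaunay triangulation with the \oyao\ guarantees of Thm~\ref{thm:oyao}, and then verifying that computing \oyao\ clique-by-clique and taking the union costs none of these properties. Throughout I use that $\udel$ is planar and a $C_{del}$-spanner of the UDG $G$~\cite{LCW02}, and that for each $i$ the graph $G_i=(V,E_i)$ is planar, being an edge-subset of $\udel$, so Thm~\ref{thm:oyao} applies to each $\ydel_i$ (the output of \oyao\ on $G_i$).

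For planarity, I would first note that, by the paper's definition, a spanner is a subgraph, so Thm~\ref{thm:oyao} yields $\ydel_i\subseteq G_i\subseteq\udel$ for every $i$. Hence $\ydel=\bigcup_i\ydel_i\subseteq\udel$, and any subgraph of the planar $\udel$ is planar. This dispatches the only genuine worry, namely that a union of separately computed planar graphs need not be planar: here all pieces live inside one fixed planar triangulation.

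For the stretch factor, the key is a coverage argument. Since the $(\beta,\delta)$-clique cover covers $V$, every node $u$ lies in some clique $V_i$, and then every $\udel$ edge $uv$ incident to $u$ belongs to $E_i$, hence to $G_i$. By Thm~\ref{thm:oyao}, $\ydel_i$ (and therefore $\ydel$) contains a $u$--$v$ path of length at most $(1+\tfrac{\pi}{2})|uv|$. Thus $\ydel$ contains a $(1+\tfrac{\pi}{2})$-spanner path for every edge of $\udel$, which upgrades to $|\ssp_{\ydel}(x,y)|\le(1+\tfrac{\pi}{2})\,|\ssp_{\udel}(x,y)|$ for all pairs by replacing each $\udel$ edge on a shortest $\udel$-path with its detour. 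Composing with the $C_{del}$ stretch of $\udel$ over $G$ gives $|\ssp_{\ydel}(x,y)|\le C_{del}(1+\tfrac{\pi}{2})\,|\ssp_G(x,y)|$, so $\ydel$ is a $t_1$-spanner for every $t_1>C_{del}(\tfrac{\pi}{2}+1)$; the strict inequality leaves slack to spend on the short-edge clause.

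The short-edge property I expect to be the main obstacle, and I would prove it in two stages glued by the same coverage argument. Stage one concerns $\udel$ alone: for each $ab\in G$ (so $|ab|\le 1$) there is a Delaunay spanner path from $a$ to $b$ of length at most $C_{del}|ab|$ all of whose edges are no longer than $|ab|$ — the short-edge form of the Keil--Gutwin/Delaunay spanner construction, where the recursion keeps every edge a chord of a circle anchored on $ab$. Stage two is the matching property for \oyao: when $\ydel_i$ replaces a $\udel$ edge $cd$ by its $(1+\tfrac{\pi}{2})$-detour, every detour edge has length at most $|cd|$ — this follows from the cone geometry (the kept edge is shortest in its cone, and the consecutive-neighbor shortcut edges inside a cone of angle at most $\pi/3$ are shorter than the replaced edge) and is exactly the refinement stated for the ordered-Yao structure in~\cite{WangLi03}. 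Applying stage two to each edge $cd$ (with $|cd|\le|ab|$) of the stage-one path, and using coverage to locate the relevant $\ydel_i$, yields a $\ydel$-path for $ab$ of length at most $C_{del}(1+\tfrac{\pi}{2})|ab|<t_1|ab|$ with every edge no longer than $|ab|$. The delicate point I would isolate is forcing the \oyao\ detour edges to be controlled by $|cd|$ rather than merely by the unit bound, since it is this sharper inequality that propagates down to ``no edge exceeds $|ab|$''; the strictness $t_1>C_{del}(\tfrac{\pi}{2}+1)$ is convenient here, as it absorbs the boundary case $ab\in\udel$ where one must reroute around the direct edge.
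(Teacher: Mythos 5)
Your stretch-factor argument is essentially the paper's own (cover $\ldel = \bigcup_i G_i$ by the cliques and compose the $C_{del}$ stretch with the $(1+\frac{\pi}{2})$ stretch of \oyao), and for the short-edge clause the paper simply cites~\cite{WangLi03}, which is what your stage two amounts to. The genuine gap is in your planarity argument. It rests on the claim that Thm.~\ref{thm:oyao} gives $\ydel_i \subseteq G_i \subseteq \udel$, and that claim is false: the \oyao\ method does not output a subgraph of its input. Besides keeping, in each cone, a shortest input edge $us_i$, it \emph{adds} the shortcut edges $s_js_{j+1}$ between consecutive neighbors of $u$ inside the cone (see Table~\ref{tab:orderedyao}), and these edges are in general not edges of $G_i$, nor of $\udel$ or $\ldel$ --- consecutive neighbors of $u$ within a cone need not be adjacent in the unit/localized Delaunay structure, precisely because the long Delaunay edges have been deleted. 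So $\ydel$ does not live inside any fixed planar triangulation, and ``subgraph of a planar graph'' proves nothing here. The word ``spanner'' in Thm.~\ref{thm:oyao} is used in the loose sense of distance approximation on the same vertex set, not in the paper's formal subgraph sense, so the containment you read off from the definition is a misreading.

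Because of this, planarity genuinely requires an argument about the newly added edges, and that is exactly what the paper's proof supplies: every non-Delaunay edge $ab \in \ydel$ closes an empty triangle $\triangle abc$ whose other two sides $ac$ and $bc$ are Delaunay edges (with $c$ the cone apex whose \oyao\ step created $ab$); hence a crossing between a non-Delaunay edge and a Delaunay edge of \ydel, or between two non-Delaunay edges, would force a crossing between two edges of \ldel, contradicting the planarity of \ldel. Note that this argument is also what disposes of the union-over-cliques worry you raised and then dismissed: even granting that each $\ydel_i$ is planar by Thm.~\ref{thm:oyao}, a union of planar graphs need not be planar, and it is the anchoring of every added edge to an empty triangle of the single global structure \ldel\ that makes $\bigcup_i \ydel_i$ planar. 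To repair your proposal, replace the containment claim with a crossing argument of this form; the rest of your write-up can stand.
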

\begin{proof}
$\ldel$ is a planar $C_{del}$-spanner for $G$~\cite{LCWW03}. By Thm.~\ref{thm:oyao}, $\ydel_i$ is a planar $(\frac{\pi}{2}+1)$-spanner for $G_i$, for each $i$. These together with the fact that $\ldel = \bigcup_i G_i$ show that $\ydel$ is a $t_1$-spanner for $G$.

\begin{figure}[htbp]
\centering
\includegraphics[width=0.45\linewidth]{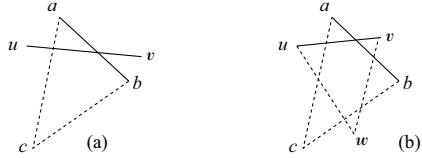}
\caption{\ydel\ is planar: edges $ab$ and $uv$ cannot cross.}
\label{fig:ydelplanar}
\end{figure}

The fact that $\ydel$ is planar follows an observation in~\cite{WangLi03} stating that, if a non-Delaunay edge $e \in \ydel$ crosses a Delaunay edge $e'$, then $e'$ must be longer than one unit and does not belong to \ydel. More precisely, the following properties hold:
\begin{itemize}
\item[(a)] A non-Delaunay edge $ab \in \ydel$ cannot cross a Delaunay edge $uv \in \ydel$. Recall that each non-Delaunay edge $ab \in \ydel$ closes an empty triangle $\triangle abc$ whose other two edges $ac$ and $bc$ are Delaunay edges. Thus, if $ab$ crosses $uv$, then at least one of $ac$ and $bc$ must cross $uv$, contradicting the planarity of \ldel (see Fig~\ref{fig:ydelplanar}a).

\item[(a)] No two non-Delaunay edges $ab, uv \in \ydel$ cross. The arguments here are similar to the ones above: if $ab$ and $uv$ intersect, then at least two of the incident Delaunay edges intersect, contradicting the planarity of \ldel (see Fig.~\ref{fig:ydelplanar}b).
\end{itemize}
The second part of the lemma follows from~\cite{WangLi03}. \hfill\ABox
\end{proof}

\begin{theorem}
The output $H$ generated by \palg($G, \e$) is a planar $t$-spanner for $G$, for any constant $t > C_{del}(1+\e)(1+\frac{\pi}{2})$.
\label{thm:pspanner}
\end{theorem}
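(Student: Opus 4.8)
The statement asserts two properties of $H$: planarity and the stretch bound $t$. Planarity I would dispatch immediately. Steps 3 and 4 of \palg\ never create edges outside $\ydel$: step 3 either moves an edge of $\ydel$ into $H$ or deletes it, and step 4 moves selected $\ydel$-edges into $H$. Hence $H \subseteq \ydel$, and since $\ydel$ is planar by Lem.~\ref{lem:shortedges}, so is $H$.

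For the stretch, the plan is to factor the bound through $\ydel$. By Lem.~\ref{lem:shortedges}, $\ydel$ is a $t_1$-spanner for $G$ for every $t_1 > C_{del}(1+\frac{\pi}{2})$. If I can show that $H$ is a $(1+\e)$-spanner for $\ydel$, then for any $G$-edge $ab$ I take the $t_1$-spanner $ab$-path $\pi$ in $\ydel$ and replace each edge $e$ of $\pi$ by a $(1+\e)$-spanner $e$-path in $H$; this yields $|\ssp_H(a,b)| \le (1+\e)|\pi| \le (1+\e)t_1|ab|$, and summing over the edges of a shortest $G$-path extends the bound to all vertex pairs. Choosing $t_1 \in (C_{del}(1+\frac{\pi}{2}),\, t/(1+\e))$ --- possible precisely because $t > C_{del}(1+\e)(1+\frac{\pi}{2})$ --- gives stretch $t_1(1+\e) < t$. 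So everything reduces to the single claim: \emph{for every edge of $\ydel$, $H$ contains a $(1+\e)$-spanner path.}

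I would prove this claim by induction on edge \id, in the style of Thm.~\ref{thm:spanner}, splitting on how the edge is treated by the algorithm. An edge kept in $H$ is its own path. An internal edge $uv$ (lying in a single grid cell) deleted in step 3 satisfies $|\ssp_Q(u,v)| \le (1+\e)|uv|$ for the query graph $Q$ at deletion time; since step 3 processes $E_\N$ in increasing \id\ order and the surviving smaller-\id\ internal edges are exactly those already placed in $H$, I would argue the witness path can be realized in $H$ with length $\le (1+\e)|uv|$. A spanning edge $uv$ not selected in step 4 is handled by the cluster cover: its endpoints lie in clusters centered at some $a$ and $b$, the connecting edge $ab$ is placed in $H$, and routing from $u$ to $a$ inside its cluster, along $ab$, and from $b$ to $v$ inside its cluster costs at most $|ab| + 2r < |uv| + 4r$; since every inter-cell edge has length $> \delta$ and $r \le \e\delta/4$, this is at most $(1+\e)|uv|$, exactly as in the base case of Thm.~\ref{thm:spanner}.

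The main obstacle is the well-foundedness of this induction, caused by the fact that both the greedy query graph $Q$ in step 3 and the $r$-cluster cover in step 4 are defined over $\ydel$ rather than over the partially built $H$. Consequently a witness path may route through $\ydel$-edges that are themselves deleted later (spanning edges dropped in step 4, or internal edges of other cells dropped in other rounds of step 3), so one cannot simply assert that the witness path already lies in $H$, as in the textbook greedy argument. My plan to close this gap is to run a single induction on edge \id\ over all of $\ydel$ and to invoke the second half of Lem.~\ref{lem:shortedges} --- that each edge of $\ydel$ admits a spanner path using only strictly shorter edges --- so that every edge appealed to in a replacement has strictly smaller \id; the geometric slack (the \oyao\ cone bound, and the inequality $r \le \e\delta/4 < \e|uv|/4$ for inter-cell edges) then absorbs the detour costs and keeps the accumulated stretch at $(1+\e)$ rather than letting it compound across successive replacements. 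Verifying that the step-3 witness path can always be taken over strictly-smaller-\id\ edges --- in particular, ruling out any essential dependence on a longer, nearly parallel edge --- is the delicate point I would spend the most care on.
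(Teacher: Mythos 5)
Your route is exactly the paper's route: planarity from $H \subseteq \ydel$ together with Lem.~\ref{lem:shortedges}, and the stretch bound by taking, for each $G$-edge, a $\ydel$-spanner path and replacing each of its edges by a $(1+\e)$-spanner path in $H$ --- the Greedy certificate of step 3 for edges inside a single cell, and the cluster-cover argument of step 4 (as in the base case of Thm.~\ref{thm:spanner}) for edges spanning cells. The paper packages this as an induction on edge \id\ over $G$ with precisely your two cases, so up to bookkeeping the arguments coincide.

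The difference lies in the ``delicate point'' you flag, and there your diagnosis is better than your proposed cure. The concern is genuine: step 3 certifies $|\ssp_Q(u,v)| \le (1+\e)|uv|$ in the query graph $Q$, which at query time still contains edges of not-yet-processed cells and all inter-cell edges, and step 4's cluster cover is a cover of $\ydel$, not of $H$; the paper's own proof silently writes $\ssp_H$ where the algorithm only guarantees $\ssp_Q$. However, your repair --- a single induction on \id\ in which every witness edge has strictly smaller \id\ --- fails at exactly the point you single out: a Greedy witness path of total length at most $(1+\e)|uv|$ may contain an individual edge longer than $uv$ (anything up to $(1+\e)|uv|$), so no decreasing well-ordering is available, and Lem.~\ref{lem:shortedges} cannot restore one because the witness path is whatever $Q$ happens to contain, not a path you get to choose. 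The gap is closed not by eliminating the nesting of replacements but by bounding its depth. First, witnesses cannot chain within a round (same-parity cells are non-adjacent, and by Lem.~\ref{lem:spquery} a query stays within the neighboring cells), so chaining occurs only across the four rounds of step 3; this is precisely Lem.~\ref{lem:step3greedy}, which concedes a factor $(1+\e)^4$ inside $Q$ --- the paper invokes it only for the weight bound, but it is equally the missing ingredient here. Second, for step 4, since $r \le \e\delta/4 < \delta$ (for $\e<4$) while every edge spanning two cells is longer than $\delta$, and since the cluster cover is computed on the already-pruned $\ydel$, every cluster path consists solely of intra-cell edges that survived step 3, i.e., edges already in $H$; so the step-4 replacement paths genuinely lie in $H$, costing one more factor $(1+\e)$. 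The net conclusion is that $H$ is a $(1+\e)^5$-spanner of $\ydel$, hence a $C_{del}(1+\frac{\pi}{2})(1+\e)^5$-spanner of $G$. This proves the theorem only after rescaling $\e$ (run \palg\ with $\e''$ satisfying $(1+\e'')^5 = 1+\e$); insisting on a single factor $(1+\e)$, as both your plan and the theorem's literal statement do, is neither achievable by this method nor necessary.
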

\begin{proof}
Since $H \subseteq \ydel$, by Lem.~\ref{lem:shortedges} we have that $H$ is planar. We now show that $H$ is a $t$-spanner for $G$. The proof is by induction on the length of edges in $H$. The base case corresponds to the edge  $uv \in G$ of smallest \id. Clearly $uv \in \ldel$, since $uv$ is a Gabriel edge. Also $uv \in \ydel$, since it has the smallest \id\ among all edges and therefore it belongs to the Yao structure for \ldel. We now distinguish two cases:
\begin{itemize}
\item [(a)] There is a grid cell containing both $u$ and $v$. In this case $uv \in H$, since $uv$ is the first edge queried by Greedy in step 3 and therefore it gets added to $H$.
\item [(b)] There is no grid cell containing both $u$ and $v$. Let $ab$ be the edge selected in step 4 of the algorithm, such that $u \in \N_{a}$ and $v \in \N_{b}$ (see Fig.~\ref{fig:spanner}a). Then arguments similar to the ones used for the base case of Thm.~\ref{thm:spanner} show that $\ssp_H(u, a) \oplus ab \oplus \ssp_H(b, v)$ is a $(1+\e)$-spanner $uv$-path, for any  $r \le \e\delta/4$. 
\end{itemize}
This concludes the base case. To prove the inductive step, pick an arbitrary edge $uv \in G$, and assume that $H$ contains $t$-spanner paths between the endpoints of each edge in $G$ of smaller \id. By Lem.~\ref{lem:shortedges}, \ydel\ contains a $\frac{t}{1+\e}$-spanner path $u = u_0, u_1, \ldots, u_s = v$:
\begin{equation}
\sum_{i=0}^s |u_iu_{i+1}| \le \frac{t}{1+\e} |uv|
\label{eq:sp1}
\end{equation}
For each edge $u_iu_{i+1} \in \ydel$, one of the following cases applies:
\begin{itemize}
\item [(a)] There is a grid cell containing both $u_i$ and $u_{i+1}$. In this case, the Greedy step (step 3 of the algorithm) guarantees that $|\ssp_H(u_i,u_{i+1})| \le (1+\e)|u_iu_{i+1}|$.
\item [(b)] There is no grid cell containing both $u_i$ and $u_{i+1}$. Arguments similar to the ones for the base case show that $|\ssp_H(u_i, u_{i+1})| \le (1+\e)|u_iu_{i+1}|$.
\end{itemize}
In either case, $H$ contains a $(1+\e)$-spanner $u_iu_{i+1}$-path.
This together with~(\ref{eq:sp1}) shows that
$$|\ssp_H(u, v)| = \sum_{i=0}^s |\ssp_H(u_i, u_{i+1})| \le (1+\e) \sum_{i=0}^s |u_iu_{i+1}| \le t|uv|.$$
This completes the proof. \hfill\ABox
\end{proof}

\begin{theorem}
The output $H$ generated by \palg\ has maximum degree $O(1)$.
\label{thm:pdegree}
\end{theorem}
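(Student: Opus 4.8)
The plan is to bound the degree of $H$ by tracking how many edges each node can acquire across the four steps of \palg, and then showing that each source contributes only a constant. Since $H \subseteq \ydel$, it suffices to prove that every node has bounded degree in \ydel, because steps 3 and 4 only \emph{remove} edges (Greedy filtering) or \emph{select} a subset (the cluster-center edges of step 4), and hence cannot increase any node's degree beyond what it already had in \ydel. So the crux reduces to bounding the degree of a node in $\ydel = \bigcup_i \ydel_i$.

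First I would recall from Thm.~\ref{thm:oyao} that each $\ydel_i = $\oyao$(G_i)$ is a structure of maximum degree $25$. Thus within any single clique-restricted Yao-Delaunay structure $\ydel_i$, a node $u$ receives at most $25$ incident edges. The key combinatorial observation, established already in Sec.~\ref{sec:alg1}, is that because $\delta < \beta/4$, any node $u$ lies in at most four grid cells, and therefore participates in at most four cliques $V_i$. Consequently $u$ is a vertex in at most four of the subgraphs $G_i$ on which \oyao\ is run, and can collect incident edges from at most four distinct $\ydel_i$. Summing, the degree of $u$ in $\ydel = \bigcup_i \ydel_i$ is at most $4 \times 25 = 100$, a constant.

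Finally I would close the argument by noting that the passage from \ydel\ to $H$ is degree-nonincreasing: step 3's Greedy procedure only discards edges from \ydel\ (an edge $uv$ is added to $H$ only if it was already present, and edges failing the spanner test are eliminated), and step 4 merely adds to $H$ a subset of the surviving \ydel\ edges, namely those connecting $r$-cluster centers. No edge outside \ydel\ is ever introduced into $H$. Therefore every node's degree in $H$ is bounded above by its degree in \ydel, which is $O(1)$, completing the proof.

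I expect the main subtlety to be the justification that a node participates in at most a constant number of cliques $V_i$ and hence in at most a constant number of the $G_i$: this is where the geometric parameters $\beta$ and $\delta$ enter, and it is the only place where the bound could in principle blow up if the overlap structure of the grid were different. Once that fact is in hand (and it follows directly from the $(\beta,\delta)$-clique cover construction of Sec.~\ref{sec:alg1}), combining it with the constant degree bound of \oyao\ and the observation that $H \subseteq \ydel$ is immediate.
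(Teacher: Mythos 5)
Your proof follows the same skeleton as the paper's: reduce to bounding the degree of $\ydel$ via $H \subseteq \ydel$, invoke the degree bound of $25$ for each $\ydel_i$ from Thm.~\ref{thm:oyao}, and multiply by the number of structures $\ydel_i$ in which a node can acquire incident edges. However, your counting of that last quantity contains a genuine error. You claim that a node $u$ acquires incident edges in at most four of the $\ydel_i$ because $u$ belongs to at most four cliques $V_i$. This misreads step 2.1 of the algorithm: $G_i = (V, E_i)$ is \emph{not} the subgraph induced on $V_i$; rather, $E_i$ contains all unit Delaunay edges \emph{incident to} nodes in $V_i$, i.e., edges with at least one endpoint in $V_i$. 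So a node $u \notin V_i$ still ends up with incident edges in $\ydel_i$ whenever it has a \udel-neighbor inside $V_i$. Since \udel\ edges can be as long as one unit while the grid cells have side $\beta \le \frac{1}{\sqrt{2}}$, the \udel-neighbors of $u$ may be scattered over all of the $O(\frac{1}{\beta^2})$ cells meeting the unit disk centered at $u$. Hence $u$ can participate in $O(\frac{1}{\beta^2})$ of the graphs $\ydel_i$ --- not four --- and your bound of $4 \times 25 = 100$ is false in general.

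The good news is that the conclusion survives the correction: the paper's own proof performs exactly this count and obtains a degree bound of $25 \cdot O(\frac{1}{\beta^2})$, which is still a constant since $\beta$ is a fixed parameter. So the flaw is localized to the step you yourself flagged as ``the main subtlety,'' and the fix is to replace the four-cliques-per-node argument (which bounds membership of $u$ in the sets $V_i$, a different quantity) with the observation that the number of relevant $\ydel_i$ is bounded by the number of grid cells intersecting a unit disk. Your final paragraph, arguing that steps 3 and 4 are degree-nonincreasing because they only select edges already present in $\ydel$, is correct and matches the paper's (terser) appeal to $H \subseteq \ydel$.
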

\begin{proof}
Since $H \subseteq \ydel$, it suffices to show that the graph \ydel\ constructed in step 2 of the \palg\ algorithm
has degree bounded above by a constant.
By Thm.~\ref{thm:oyao}, the maximum degree of $\ydel_i$ is 25, for each $i$. Also note that unit disk centered at a node $u$ intersects $O(\frac{1}{\beta^2})$ grid cells, meaning that $u$ is a neighbor of nodes in $O(\frac{1}{\beta^2})$ grid cells and therefore belongs to a constant number of graphs $\ydel_i$. This implies that the maximum degree of $u$ is $25 \cdot O(\frac{1}{\beta^2})$, which is a constant. \hfill\ABox
\end{proof}

\begin{definition}
{\bf [Leapfrog Property]}
For any $t \ge t' > 1$, a set $F$ of edges has the $(t',t)$-leapfrog property if,
for every subset $S = \{u_1v_1, u_2v_2, \ldots, u_mv_m\}$ of $F$,
\begin{equation}
t' \cdot |u_1 v_1| < \sum_{i=2}^m |u_i v_i| + t \cdot
\Big(\sum_{i=1}^{m-1} |v_i u_{i+1}| + |v_m u_1|\Big).
\label{eq:leapfrog}
\end{equation}
\end{definition}
Das and Narasimhan~\cite{DasNarasimhan97} show the following
connection between the leapfrog property and the weight of the
spanner.
\begin{lemma}
\label{lem:DasNarasimhan}
Let $~t \ge t' > 1$. If the line segments $F$ in $d$-dimensional
space satisfy the $(t', t)$-leapfrog property, then $\w(F) =
O(\w(MST))$, where $MST$ is a minimum spanning tree connecting the
endpoints of line segments in $F$.
\end{lemma}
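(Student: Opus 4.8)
The plan is to bound $\w(F)$ by decomposing $F$ into geometric length classes, controlling each class by a packing (gap) argument, and then charging the whole collection against a single traversal of the global $MST$ so that the leapfrog property prevents the usual logarithmic blow-up across scales.

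First I would extract a gap property from the two-edge instance of~(\ref{eq:leapfrog}). Taking $S=\{u_1v_1,u_2v_2\}$ with $|u_2v_2|\le|u_1v_1|$ gives $t'|u_1v_1| < |u_2v_2| + t(|v_1u_2|+|v_2u_1|)$, so that $|v_1u_2|+|v_2u_1| > \frac{t'-1}{t}|u_1v_1|$: the two ``cross'' gaps between any pair of edges sum to at least a constant fraction of the longer edge. Partitioning $F$ into classes $F_j=\{e\in F: 2^j\le |e| < 2^{j+1}\}$, this gap property limits how many edges of $F_j$ can crowd into any disk of radius $O(2^j)$, giving a packing bound in the spirit of the isolation estimate of Lem.~\ref{lem:isolated}. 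The difficulty is that, unlike the $c$-isolated sets of that lemma, leapfrog edges may share or nearly share endpoints (a whole bush of edges can emanate from one small region), so the pairwise gap alone does not force mutual isolation of all endpoints, and summing the per-class bounds naively costs a factor logarithmic in the aspect ratio.

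The genuinely hard step, and the reason the \emph{multi-edge} form of~(\ref{eq:leapfrog}) is indispensable, is to remove both the clustering and the logarithmic factor by a single global charging argument. I would fix a Hamiltonian traversal $P$ of the minimum spanning tree $T$ on the endpoints of $F$, with $\w(P)\le 2\,\w(T)$ as in the proof of Lem.~\ref{lem:isolated}, and charge each edge of $F$ to the portion of $P$ lying near it. The key point is that any family of edges charged to overlapping parts of $P$ can be threaded into a single sequence $u_1v_1,u_2v_2,\ldots,u_mv_m$ whose consecutive gap segments $v_iu_{i+1}$ (and the closing segment $v_mu_1$) all lie inside the charged region; inequality~(\ref{eq:leapfrog}) then forces $\sum_i|u_iv_i|$ to be at most a constant multiple of the total gap length, i.e.\ of the charged length of $P$. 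Thus each unit of $P$ absorbs only $O(1)$ total edge weight, across \emph{all} length classes simultaneously, yielding $\w(F)=O(\w(P))=O(\w(MST))$. Making this charging precise --- choosing the regions so that overlapping edges really do form a valid leapfrog subset and so that every edge is charged --- is the technical heart of the Das--Narasimhan theorem and the main obstacle; the length-class and gap arguments above are the routine preliminaries that set it up.
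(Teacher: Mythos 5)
The paper itself offers no proof to compare against: Lem.~\ref{lem:DasNarasimhan} is imported verbatim from Das and Narasimhan~\cite{DasNarasimhan97}, exactly as your last sentence anticipates. Judged on its own terms, your proposal sets the problem up correctly but does not prove it. The preliminaries are fine: the two-edge instance of~(\ref{eq:leapfrog}) does yield $|v_1u_2|+|v_2u_1| > \frac{t'-1}{t}|u_1v_1|$, and your diagnosis of why this is insufficient --- leapfrog edges can cluster at shared endpoints, so Lem.~\ref{lem:isolated} does not apply and summing over length classes loses a factor logarithmic in the aspect ratio --- is exactly right.

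The gap is in the step you yourself flag as ``the technical heart'': the claim that threading co-charged edges into a single sequence and applying~(\ref{eq:leapfrog}) once ``forces $\sum_i|u_iv_i|$ to be at most a constant multiple of the total gap length.'' A single application of~(\ref{eq:leapfrog}) cannot deliver this, because the inequality bounds only the \emph{longest} edge: it says $t'|u_1v_1|$ is less than the sum of the \emph{remaining} edges plus $t$ times the gaps. As soon as the threaded family contains more than $t'+1$ edges of length comparable to $|u_1v_1|$, the term $\sum_{i=2}^m|u_iv_i|$ alone already exceeds $t'|u_1v_1|$, the inequality is satisfied vacuously, and it imposes no constraint whatsoever on the gap lengths. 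So the multi-edge form of~(\ref{eq:leapfrog}) is uninformative precisely in the clustering regime where your pairwise bound fails; it never provides an upper bound on total edge weight in terms of gaps. The correct use of the property runs in the contrapositive direction: assuming $\w(F)$ is too large relative to $\w(MST)$, one must \emph{construct} a subset $S$ that violates~(\ref{eq:leapfrog}), i.e., exhibit one long edge that is cheaply shortcut by other edges plus connecting segments, and building that violating subset is the hierarchical length-class/direction-class induction with a credit scheme that occupies Das and Narasimhan's paper (and, in fully rigorous form, an entire chapter of Narasimhan and Smid's book~\cite{ns-gsn-07}). Since that argument is neither carried out nor replaced by anything that works, the proposal is a correct framing of the difficulty, not a proof of the lemma.
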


\begin{lemma}
At the end of each iteration $k$ in step 3 of the \palg\ algorithm,
for $k = 1, \ldots, 4$, $Q$ contains $(1+\e)^k$-spanner paths between
the endpoints of any \ydel\ edge processed in iterations 1 through $k$.
\label{lem:step3greedy}
\end{lemma}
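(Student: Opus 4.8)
The plan is to induct on $k$, reducing everything to a geometric separation fact that decouples the four colour classes. First I would prove that separation lemma: if $u,v$ lie in a grid cell $\N$ of colour $k$, then any $u$-$v$ path of length at most $(1+\e)|uv|$ avoids every other colour-$k$ cell $\N'$. Two same-colour cells differ by an even index in each coordinate, so their centres are at least $2(\beta-2\delta)$ apart and their nearest points at least $\beta-4\delta$ apart; a path touching a point $p\in\N'$ has length at least $|up|+|pv|\ge 2(\beta-4\delta)$, while $(1+\e)|uv|\le(1+\e)\beta\sqrt2$, and the algorithm's constraints ($\delta<\beta/4$, together with a sufficiently small $\e$) give $2(\beta-4\delta)>(1+\e)\beta\sqrt2$. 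I expect this to be the main obstacle, because it is exactly what guarantees that the path certifying the deletion of an edge in $\N$ uses no edge lying in another colour-$k$ cell, and hence cannot be broken by deletions carried out in the rest of the same iteration; it is also what makes the outcome of iteration $k$ independent of the order in which its cells are handled.

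Granting this, the remaining ingredient is a simple persistence bookkeeping: during iteration $k$ the only edges removed from $\ydel$ are those lying in colour-$k$ cells, whereas edges kept by Greedy, edges spanning two cells, and edges lying in cells of colours larger than $k$ all remain in $Q$. For the base case $k=1$ I would take any edge processed in the first iteration: if Greedy keeps it the one-edge path gives stretch $1\le 1+\e$, and if Greedy deletes $uv\in\N$ then the query returned a path $P\subseteq Q$ with $|P|\le(1+\e)|uv|$ whose edges, by the separation lemma, are all kept edges of $\N$, inter-cell edges, or edges of colour larger than $1$ --- each of which survives to the end of iteration $1$, so $P$ still lies in $Q$.

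For the inductive step I would assume the statement at the end of iteration $k-1$ and split on when $uv$ was processed. If $uv$ is processed in iteration $k$, the base-case argument applies unchanged and produces a $(1+\e)$-path, which is a fortiori a $(1+\e)^k$-path. If $uv$ was processed in some iteration $j<k$, the inductive hypothesis supplies a path $P'\subseteq Q$ at the end of iteration $k-1$ with $|P'|\le(1+\e)^{k-1}|uv|$; passing through iteration $k$ can only delete those edges of $P'$ lying in colour-$k$ cells, and each such deleted edge $zw$, being processed in iteration $k$, admits a $(1+\e)$-path in $Q$ by the first case. Replacing every deleted edge of $P'$ by its detour and keeping the rest yields $P''\subseteq Q$ with $|P''|\le(1+\e)|P'|\le(1+\e)^k|uv|$, which closes the induction. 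The only routine check left is that the detours and the undeleted edges of $P'$ all still lie in $Q$ at the end of iteration $k$, which is precisely the persistence bookkeeping from the base case.
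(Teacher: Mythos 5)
Your induction is structurally the same as the paper's: edges processed in iteration $k$ itself are handled by the Greedy query (kept edges give a one-edge path, deleted edges have a $(1+\e)$-certificate), and edges from earlier iterations inherit a $(1+\e)^{k-1}$-path whose deleted colour-$k$ edges are replaced by their $(1+\e)$-detours, multiplying the stretch by at most $(1+\e)$ per iteration. What you add---and what the paper's own, terser proof leaves implicit---is the persistence argument: the path certifying a deletion must survive the remainder of the iteration, which requires that it use no edge lying in a different cell of the same colour; otherwise the certificates of two concurrently processed cells could invalidate each other circularly. The paper only establishes this confinement separately, in Lem.~\ref{lem:spquery} (and invokes it there for the communication bound, not for this lemma), so making it explicit here is the right instinct.

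The genuine flaw is in your arithmetic for that separation lemma. You need $2(\beta-4\delta)>(1+\e)\beta\sqrt{2}$ and assert that $\delta<\beta/4$ together with a sufficiently small $\e$ gives it. It does not: with $\delta=\beta/5$, which the algorithm permits, the left-hand side is $2\beta/5$ while the right-hand side is at least $\beta\sqrt{2}$, so the inequality fails for every $\e\ge 0$. Your bound actually requires $\delta<\frac{2-(1+\e)\sqrt{2}}{8}\,\beta$ (about $0.073\beta$ even as $\e\to 0$), and in particular forces $\e<\sqrt{2}-1$. Two repairs are available: (i) since $\beta$ and $\delta$ are constants ``to be fixed later,'' impose this stronger constraint on $\delta$ and state the induced restriction on $\e$; or, better, (ii) replace the crude triangle-inequality bound by the locus argument of Lem.~\ref{lem:spquery}: every point $z$ on a $u$--$v$ path of length at most $(1+\e)|uv|$ satisfies $|uz|+|zv|\le(1+\e)|uv|$, hence lies in an ellipse with foci $u,v$ which, for any $\e<2$, is contained in the union of $\N$ and its eight incident cells; since two cells of the same colour are never incident, your separation statement follows for all $\e<2$ with no extra constraint on $\delta$. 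With either repair, the rest of your bookkeeping (kept edges, inter-cell edges, and edges in cells of other colours all survive iteration $k$; deleted colour-$k$ edges are replaced by their detours) is sound and the induction closes.
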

\begin{proof}
The proof is by induction on $k$. The base case corresponds to $k=1$.
In this case, Greedy ensures that $Q$ contains a $(1+\e)$-spanner $uv$-path for
each edge $uv$ processed in this iteration. This is because $uv \in \ydel$
either gets added to $Q$ in step 3.1 (and never removed thereafter),
or gets queried in step 3.2. To prove the inductive step, consider
a particular iteration $k > 1$, and assume that the lemma holds for iterations
$\ell = 1 \ldots k-1$.
Again Greedy ensures that $Q$ contains a $(1+\e)$-spanner $uv$-path for
each edge $uv$ processed in iteration $k$. Consider now an arbitrary edge
$uv$ processed in iteration $\ell < k$. By the inductive hypothesis, at
the end of round $k-1$, $Q$ contains a $(1+\e)^{k-1}$-spanner path $p(u,v)$.
However, it is possible that $p(u,v)$ contains edges processed in round
$k$ (since Greedy does not restrict $p(u,v)$ to lie entirely in the
cell containing $uv$). For each such edge, Greedy ensures the existence
of a $(1+\e)$-spanner path in $Q$. It follows that, at the end of iteration
$k$, $Q$ contains a $(1+\e)^k$-spanner $uv$-path. \hfill\ABox
\end{proof}

\begin{theorem} {\bf [Leapfrog Property]}
Let $L$ be an arbitrary grid cell and let $F \subseteq E_L$ be the set of edges with both
endpoints in $L$ that get added to $H$ in step 3 of
the algorithm.
Then $F$ satisfies the $(1+\e, t)$-leapfrog property, for $t = (1+\e)^4(\frac{\pi}{2}+1)C_{del}$.
\label{thm:H.leapfrog}
\end{theorem}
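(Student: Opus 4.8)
The plan is to exploit the fact that $F$ is produced by a greedy filter (step 3), so the characteristic ``no short detour'' property of greedy spanners must hold for its edges. Fix an arbitrary subset $S = \{u_1v_1, u_2v_2, \ldots, u_mv_m\} \subseteq F$ and relabel so that $u_1v_1$ is the longest edge of $S$ (equivalently, the one greedy processes last within the iteration that handles cell $L$); this is the binding orientation of the leapfrog inequality and the one a greedy construction actually guarantees. Since $u_1v_1 \in F$, it was \emph{added} to $H$ rather than eliminated, so at the instant it was queried in step 3.2 we had $\ssp_Q(u_1,v_1) > (1+\e)|u_1v_1|$ in the query graph $Q$. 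My goal is then to build a $u_1$--$v_1$ walk $W$ inside that query graph with $|W| \le \sum_{i=2}^m|u_iv_i| + t\big(\sum_{i=1}^{m-1}|v_iu_{i+1}| + |v_mu_1|\big)$; combining the two gives $(1+\e)|u_1v_1| < |W| \le \text{RHS}$, which is exactly the $(1+\e,t)$-leapfrog inequality.

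The walk $W$ traverses the cycle backwards, $u_1 \to v_m \to u_m \to \cdots \to u_2 \to v_1$, using the edges $u_iv_i$ for $i \ge 2$ together with a \emph{bridge} in $Q$ replacing each connecting segment $v_iu_{i+1}$ (and $v_mu_1$). The edges $u_iv_i$, $i\ge 2$, are available: they lie in $F\subseteq E_L$ and have smaller \id\ than $u_1v_1$, so they were processed earlier in the same cell and are present in $Q$ when $u_1v_1$ is queried. Before handling the bridges I would dispose of the easy case: if some connector $ab$ satisfies $|ab| \ge |u_1v_1|$, then the right-hand side already exceeds $t|ab| \ge t|u_1v_1| > (1+\e)|u_1v_1|$ (because $t = (1+\e)^4(\tfrac{\pi}{2}+1)C_{del} > 1+\e$), and the inequality holds trivially. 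So I may assume every connector has length strictly less than $|u_1v_1|$.

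The heart of the proof is the construction of each bridge. A connector $ab$ has $|ab| < 1$, so $ab$ is an edge of $G$, and by Lemma~\ref{lem:shortedges} the graph \ydel\ contains a $t_1$-spanner $a$--$b$ path ($t_1 = (\tfrac{\pi}{2}+1)C_{del}$) whose edges are all \emph{shorter} than $|ab|$, hence shorter than $|u_1v_1|$. I would realize this path inside the query graph edge by edge: any of its edges lying outside cell $L$ persists in $Q$, while any edge lying inside a cell has smaller \id\ than $u_1v_1$ and was therefore already processed, so it is either present in $Q$ or was eliminated only after the greedy filter had installed a $(1+\e)$-spanner substitute. Invoking Lemma~\ref{lem:step3greedy}, these substitutions compound across the four iterations of step 3 to a factor of at most $(1+\e)^4$, yielding a bridge of length at most $(1+\e)^4 t_1 |ab| = t|ab|$. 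Summing over all bridges and the edges $u_iv_i$ then gives $|W| \le \text{RHS}$, completing the argument.

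The main obstacle I anticipate is precisely this bridge step: one must argue rigorously that the $t_1$-spanner path supplied by \ydel\ can be re-routed through the query graph \emph{as it stands at the moment $u_1v_1$ is processed}, rather than merely in the final graph. This rests on two facts that deserve careful verification --- that the second part of Lemma~\ref{lem:shortedges} forces every bridge edge to be shorter than $u_1v_1$ (so that its fate under the greedy filter is already settled), and that the replacement paths promised by Lemma~\ref{lem:step3greedy} accumulate a stretch of no more than $(1+\e)^4$ over the four iterations. Everything else --- the reduction to the longest edge, the trivial long-connector case, and the final summation --- mirrors the base-case reasoning already used in Theorem~\ref{thm:spanner} and should go through routinely.
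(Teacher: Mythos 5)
Your proposal is correct and follows essentially the same route as the paper's own proof: reduce to the case where $u_1v_1$ is longest, dispose of connectors with $|ab|\ge|u_1v_1|$ trivially, bridge each remaining connector by combining the short-edge \ydel\ path of Lemma~\ref{lem:shortedges} with the $(1+\e)^4$-stretch substitutes guaranteed by Lemma~\ref{lem:step3greedy}, and conclude from the greedy add-condition $\ssp_Q(u_1,v_1)>(1+\e)|u_1v_1|$. The only (cosmetic) difference is that you treat every connector uniformly via Lemma~\ref{lem:shortedges}, whereas the paper separates the cases $ab\in\ydel$ and $ab\not\in\ydel$; your version subsumes both.
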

\begin{proof}
Consider an arbitrary subset $S = \{u_1 v_1, u_2 v_2, \ldots, u_m v_m\} \subseteq F$. To prove
inequality~(\ref{eq:leapfrog}) for $S$, it suffices to consider the
case when $u_1 v_1$ is a longest edge in $S$.
Define $S' = \{v_mu_1\} \cup\{v_{\ell} u_{\ell+1}~|~ 1 \le \ell < s\}$. Since
$u_i$ and $v_i$ lie in $L$ for each $i$, all edges from $S'$ lie entirely in $L$.
Let $ab \in S'$ be arbitrary. If $|ab| \ge |u_1v_1|$, then
inequality~(\ref{eq:leapfrog}) trivially holds, so assume that $|ab| < |u_1v_1|$.
Next we show that $Q$ contains an $ab$-path of length no greater than $t |ab|$
at the time $\{u_1, v_1\}$ gets queried.
We distinguish two cases:
\begin{itemize}
\item [(i)] $ab \in \ydel$. In this case $ab$ gets queried in step 3 prior to $u_1 v_1$,
meaning that $Q$ contains a path $P_Q(a,b)$ of length $|P_Q(a,b)| \le (1+\e)^4|ab|$, at the
time $u_1 v_1$ gets queried (by Lem.~\ref{lem:step3greedy}).
\item[(ii)] $ab \not\in \ydel$. By Lem.~\ref{lem:shortedges}, $\ydel$ contains a path $P_{\ydel}(a,b)$
of length
\begin{equation}
|P_{\ydel}(a,b)| \le \frac{t}{(1+\e)^4}|ab|
\label{eq:ydel}
\end{equation}
that contains only edges shorter than $ab$. For each edge $pq \in P_{\ydel}(a,b)$, $Q$ contains a path $P_Q(p,q)$ of length $|P_Q(p,q) \le (1+\e)^4|pq|$, at the time $u_1v_1$ gets queried (by Lem.~\ref{lem:step3greedy}). Thus we have that
\begin{equation}
|P_Q(a, b)| = \sum_{pq \in P_{\ydel}(a,b)} |P_Q(p,q)|
\le (1+\e)^4 \sum_{pq \in P_{\ydel}(a,b)} |pq| \le t |ab|
\end{equation}
This latter inequality follows from~(\ref{eq:ydel}).
\end{itemize}
For $1 \le k < s$, let $P_{\ell}$ be a shortest $v_{\ell}u_{\ell+1}$-path in $Q$, and let $P_m$ be a shortest $v_m u_1$-path in $Q$. By the arguments above, such paths exists in $Q$ at the time $u_1 v_1$ gets queried, and their stretch factor does not exceed $t$. Then $P = P_1 \oplus u_2v_2 \oplus P_2 \oplus u_3 v_3 \oplus \ldots \oplus P_m$ is a path from $u_1$ to $v_1$ in $Q$, and $\w(P)$ is no greater than the right hand side of the leapfrog inequality~(\ref{eq:leapfrog}).
Furthermore, $\w(P) > (1+\e) |u_1 v_1|$, otherwise the edge $u_1 v_1$ would not have been added to $H$ (and $Q$) in step 3 of the algorithm. This concludes the proof. \hfill\ABox
\end{proof}

\begin{theorem}
The output $H$ generated by \palg\ has total weight $O(\w(MST))$.
\label{thm:pweight}
\end{theorem}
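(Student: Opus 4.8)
The plan is to partition the edge set $E_H$ into a constant number of low-weight subsets and bound the weight of each subset separately. The edges of $H$ fall into two natural categories: those with both endpoints confined to a single grid cell (added in step 3), and those spanning multiple grid cells (added in step 4). I would handle these two categories using the two distinct machinery pieces that the paper has carefully set up, namely the leapfrog property of Thm.~\ref{thm:H.leapfrog} together with Lem.~\ref{lem:DasNarasimhan} for the first category, and the isolation property of Lem.~\ref{lem:isolated} for the second.

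For the intra-cell edges, first I would observe that it suffices to restrict attention to edges lying in a single grid cell $L$, since by Thm.~\ref{thm:H.leapfrog} the set $F \subseteq E_L$ of such edges satisfies the $(1+\e, t)$-leapfrog property with $t = (1+\e)^4(\frac{\pi}{2}+1)C_{del}$. Lem.~\ref{lem:DasNarasimhan} then immediately gives $\w(F) = O(\w(MST_L))$, where $MST_L$ connects the endpoints in $L$. To assemble these local bounds into a global bound, I would reuse the coloring idea from the proof of Thm.~\ref{thm:degreeweight}: color the grid cells with a constant number of colors (using residues of the cell indices modulo a small constant) so that cells of the same color are separated far enough that their induced minimum spanning trees are disjoint subtrees of a single global MST. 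Summing the leapfrog bound over one color class gives $O(\w(MST))$, and since there are only a constant number of color classes, the total weight of all intra-cell edges is $O(\w(MST))$.

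For the inter-cell edges, I would argue exactly as in the second half of Thm.~\ref{thm:degreeweight}. The edges added in step 4 connect centers of distinct $r$-clusters, so for any two such edges $uv$ and $ab$ with closest endpoints $u$ and $a$, the spanner property of $H$ combined with the $r$-packing property of the cluster cover gives $t|ua| \ge |\ssp_H(u,a)| > r$, hence $|ua| > r/t$. After partitioning the inter-cell edges into a constant number of matchings so that no two edges in a part share a vertex, each part is $r/t$-isolated, and every such edge has length at most one (being a $\udel$ edge), so Lem.~\ref{lem:isolated} yields $\w(\text{part}) = O(\w(MST))$. Summing over the constant number of parts gives $O(\w(MST))$ for the inter-cell edges.

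The main obstacle I anticipate is the gluing argument for the intra-cell edges: the leapfrog property is only guaranteed \emph{per grid cell}, whereas the cells overlap and a single node may belong to up to four cells, so I must be careful that the color classes genuinely separate the local MSTs into disjoint subtrees of one global spanning tree. This is precisely the subtlety that the Prim's-algorithm argument in Thm.~\ref{thm:degreeweight} was designed to handle — showing that when same-color cells are separated by at least one empty cell, edges internal to one cell are shorter than any edge crossing between cells, so a single run of Prim's algorithm contains each local tree as a subtree — and I would invoke that same reasoning here rather than reprove it.
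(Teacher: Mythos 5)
Your proposal is correct and is essentially the paper's own argument: the paper's proof of Thm.~\ref{thm:pweight} is a one-line deferral to the proof of Thm.~\ref{thm:degreeweight} together with Lem.~\ref{lem:DasNarasimhan} and Thm.~\ref{thm:H.leapfrog}, which is exactly the decomposition you spell out (per-cell leapfrog plus Das--Narasimhan for the step-3 intra-cell edges, glued across cells by the coloring/Prim argument; matchings plus the isolation Lem.~\ref{lem:isolated} for the step-4 cluster-center edges). Your explicit handling of the overlapping-cell subtlety and the check that step-4 edges have length at most one merely make explicit what the paper leaves implicit.
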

\begin{proof}
The proof is very similar to the proof of Thm.~\ref{thm:degreeweight} and uses the results of
Lem.~\ref{lem:DasNarasimhan} and Thm.\ref{thm:H.leapfrog}. \hfill\ABox
\end{proof}

\begin{figure}[htbp]
\centering
\includegraphics[width=0.5\linewidth]{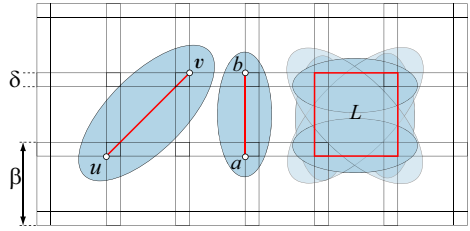}
\caption{Valid ranges for $|\ssp_H(u,v)| \le (1+\e)|uv|$ queries (step 3 of the \palg\ algorithm), illustrated for $\e = 1/2$: query range for edge $uv$ (left), for edge $ab$ (middle), and for the entire grid cell $L$ (right).}
\label{fig:sprange}
\end{figure}

\begin{lemma}
For any $\e < 2$, the shortest path query $|\ssp_{Q}(u,v)| \le (1+\e)|uv|$ in step 3 of the \palg\ algorithm involves
only those grid cells incident to the cell $\N$ containing $uv$.
\label{lem:spquery}
\end{lemma}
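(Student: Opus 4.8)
The plan is to reduce the claim to a single containment statement about an ellipse. A shortest-path query answers the predicate \emph{``does $Q$ contain a $u$--$v$ path of length at most $(1+\e)|uv|$?''}, so it suffices to show that every node that could lie on such a short path sits in a cell incident to $\N$; the search can then be confined to the subgraph of $Q$ induced by those cells without changing the answer. The reduction is purely metric. If $P$ is a $u$--$v$ path in $Q$ with $\w(P)\le (1+\e)|uv|$ and $w$ is any node on $P$, then splitting $P$ at $w$ and bounding each piece below by the corresponding straight-line distance gives $|uw|+|wv|\le \w(P)\le (1+\e)|uv|$. Hence every relevant node lies in the filled ellipse $\mathcal{E}$ with foci $u,v$ and string length $(1+\e)|uv|$, and it suffices to prove that $\mathcal{E}$ is contained in $\N$ together with its eight neighbouring cells.

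First I would fix the size and position of $\mathcal{E}$. Since $u,v\in\N$ and $\N$ is a convex $\beta\times\beta$ cell with $\beta\le 1/\sqrt2$, the midpoint $M$ of $uv$ lies in $\N$ and $|uv|\le\operatorname{diam}(\N)=\beta\sqrt2\le 1$. The ellipse has semi-major axis $a=(1+\e)|uv|/2$ and semi-minor axis $b=\tfrac{|uv|}{2}\sqrt{(1+\e)^2-1}$, centred at $M$. Along the direction of $uv$ the ellipse protrudes a distance $a-|uv|/2=\e|uv|/2$ beyond each endpoint; projecting this overshoot onto either coordinate axis gives at most $\tfrac{\e}{2}\,|uv|\,|\cos\theta|=\tfrac{\e}{2}\,|v_x-u_x|\le \e\beta/2$, which is strictly less than $\beta$ exactly when $\e<2$. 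This places the two tips of $\mathcal{E}$ within one cell-width of $\N$, i.e.\ inside the incident cells, and explains why $\e<2$ is the natural threshold.

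The step I expect to be the main obstacle is upgrading this tip estimate to the \emph{entire} ellipse, since the transverse spread governed by $b$ also pushes boundary points of $\mathcal{E}$ outward. I would argue coordinate-wise: writing $\Delta_x=|v_x-u_x|$ and $\Delta_y=|v_y-u_y|$ (each at most $\beta$ because $u,v\in\N$), the maximal $x$-extent of $\mathcal{E}$ beyond its centre is $\tfrac12\sqrt{(1+\e)^2\Delta_x^2+((1+\e)^2-1)\Delta_y^2}$, while the same membership $u,v\in\N$ couples the centre to the span via $M_x\le\beta-\Delta_x/2$ (taking $v_x\ge u_x$). Combining these two bounds with $\beta\le 1/\sqrt2$ reduces the containment to an elementary inequality in $\Delta_x,\Delta_y,\e$, whose worst case is the diagonal edge $|uv|=\beta\sqrt2$ depicted in Fig.~\ref{fig:sprange}. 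Checking this extremal configuration is exactly where the admissible range of $\e$ is pinned down, and it is the only genuinely geometric part of the argument.

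Finally I would translate the containment back to the algorithm. Every node, and hence every edge, of a length-$(1+\e)|uv|$ path lies in a cell incident to $\N$, so the predicate $|\ssp_Q(u,v)|\le (1+\e)|uv|$ can be decided by running the shortest-path search only over the edges of $Q$ with both endpoints in those cells: if the restricted search finds no short path, then none exists, since any path leaving the incident cells already has length exceeding $(1+\e)|uv|$. This is precisely the locality asserted by the lemma.
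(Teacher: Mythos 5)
Your reduction is the same as the paper's: every node on a $u$--$v$ path in $Q$ of length at most $(1+\e)|uv|$ lies in the closed ellipse $\mathcal{E}$ with foci $u,v$ and string length $(1+\e)|uv|$, so the lemma follows once $\mathcal{E}$ is contained in the union of $\N$ and its eight incident cells. Your first and last paragraphs carry out this reduction correctly, and your formulas for the semi-axes and for the coordinate extent of the rotated ellipse are right. The gap is that you never actually verify the containment --- the step you yourself call the main obstacle --- and if one runs the computation you set up, it refutes the threshold $\e<2$ rather than confirming it. Take the extreme case of an edge of length $\beta$ lying on the boundary of $\N$, say on its right side (or the side edge $ab$ in Fig.~\ref{fig:sprange}, middle). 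Its ellipse bulges horizontally beyond that side by the semi-minor axis $b=\frac{\beta}{2}\sqrt{(1+\e)^2-1}$, while the ring of incident cells adds only width $\beta-2\delta$; containment therefore requires $(1+\e)^2-1\le(2-4\delta/\beta)^2< 4$, i.e.\ $\e<\sqrt{5}-1\approx 1.24$. Your diagonal worst case gives the same bound: its coordinate extent $\frac{\beta}{2}\sqrt{2(1+\e)^2-1}$ from the cell center must not exceed $\frac{3}{2}\beta-2\delta$, forcing $(1+\e)^2\le 5$ as $\delta\to 0$. So the transverse spread you flagged is not a technicality that ends up confirming $\e<2$; it genuinely lowers the admissible range. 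The bound $\e<2$ comes only from the tip estimate (protrusion $\e\beta/2<\beta$ along the major axis), and declaring it ``the natural threshold'' on that basis is exactly where the argument breaks.

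In fairness, the paper's own proof has the same defect: it makes the identical ellipse reduction, points to a figure drawn for $\e=1/2$ (where containment does hold comfortably), and declares the containment for $\e<2$ ``easily verified,'' which by the computation above it is not. So your proposal faithfully reproduces the paper's argument, defect included; but as a proof it establishes the lemma only for $\e<\sqrt{5}-1$ (slightly less once $\delta>0$ is accounted for), not for all $\e<2$. The damage is local rather than fatal: since $(1+\e)|uv|\le 3\sqrt{2}\,\beta$ for $\e<2$, the ellipse always lies within a constant number of cell-widths of $\N$, so either restricting $\e$ or weakening the conclusion to a constant-radius block of cells around $\N$ preserves the $O(1)$-round implementation claimed in Thm.~\ref{thm:communication}.
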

\begin{proof}
For a fixed edge $uv$, the locus of all points $z$ with the property that $|uz| + |zv| \le (1+\e)|uv|$ is a closed ellipse $A$ with focal points $u$ and $v$. Clearly, a point exterior to $A$ cannot belong to a $(1+\e)$-spanner path $p(u,v)$ from $u$ to $v$, so it suffices to limit the search for $p(u,v)$ to the interior of $A$. Fig.~\ref{fig:sprange} (left and middle) shows the search domains for edges corresponding to one diagonal ($uv$) and one side ($ab$) of a grid cell. For any grid cell $\N$, the union of $\N$ and the search ranges for the two diagonals and four sides of $\N$ covers the search domain for any edge that lies entirely in $\N$ (see Fig.~\ref{fig:sprange} right). It can be easily verified that, for $\e < 2$, the search domain for $\N$ fits in the union of $\N$ and its eight surrounding grid cells. \hfill\ABox
\end{proof}

\begin{theorem}
The \palg\ algorithm can be implemented in $O(1)$ rounds of communication.
\label{thm:communication}
\end{theorem}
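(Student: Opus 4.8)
The goal is to establish that \palg\ runs in $O(1)$ rounds of communication, analyzing each of its four steps in turn and showing that each contributes only a constant number of rounds. The overall strategy mirrors the analysis of Thm.~\ref{thm:communication} for \alg, with the main novelty being the Greedy step (step 3) and its shortest-path queries, which I would handle using Lem.~\ref{lem:spquery}.

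First I would dispatch the easy steps. Step 1 constructs $\ldel$ using the method of~\cite{LCWW03}, which is known to take $O(1)$ rounds; computing the $(\beta,\delta)$-clique cover is local, since each node determines its grid cell(s) from its own coordinates and learns its clique-mates from a single broadcast, exactly as in the proof of Thm.~\ref{thm:communication} for \alg. Step 2 applies \oyao\ to each clique $V_i$; because this is restricted to the clique (and each node belongs to only $O(1/\beta^2)$ cliques by the argument in Thm.~\ref{thm:pdegree}), each node can compute its incident \ydel\ edges from information gathered within a constant number of hops, so step 2 is also $O(1)$ rounds. Step 4 is essentially identical to step 4 of \alg, so I would simply invoke the \cc-style argument showing that an $r$-cluster cover is computable in a constant number of rounds.

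The heart of the proof is step 3. The key is Lem.~\ref{lem:spquery}, which guarantees that each shortest-path query $|\ssp_Q(u,v)| \le (1+\e)|uv|$ (for $\e < 2$) only needs to examine the cell $\N$ containing $uv$ together with its eight surrounding cells. Thus the entire query can be resolved using information available within a constant number of hops of $\N$. This is precisely where the decomposition of step 3 into four rounds $k=1,2,3,4$ becomes essential: by processing cells whose indices satisfy $(i \bmod 2)\times 2 + j \bmod 2 = k$ together in round $k$, the cells processed simultaneously are non-adjacent, so their query neighborhoods (each confined to a $3\times 3$ block by Lem.~\ref{lem:spquery}) can be handled without interference, and all the edge additions/removals within a single round can be carried out by local computation after a constant number of information-exchange rounds. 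I would argue that within each of the four rounds, collecting the relevant edge set from the surrounding cells and performing the local Greedy computation takes $O(1)$ communication rounds; summing over the four rounds keeps the total constant.

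The main obstacle I anticipate is the interplay between the Greedy ordering in step 3 and the locality of the queries. Greedy processes edges in a single cell in increasing \id\ order and adds accepted edges to $Q$, so a later query in the same cell may depend on edges just added; I must confirm that all such dependencies stay within the $3\times 3$ neighborhood guaranteed by Lem.~\ref{lem:spquery}, so that the whole sequence of queries for one cell is a purely local computation requiring no additional communication once the neighborhood information has been gathered. I would also verify the message-size bound: as in the \alg\ analysis, only a constant number of cluster centers (step 4) or edges per cell (step 3) need be communicated, and since each node identifier and coordinate is $O(\log n)$ bits, every message remains $O(\log n)$ bits long. Assembling these observations across the four steps yields the claimed $O(1)$ round complexity.
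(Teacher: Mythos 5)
Your proposal is correct and follows essentially the same route as the paper's own (much terser) proof: step 1 via the $O(1)$-round construction of~\cite{LCWW03} plus the grid-based clique cover, steps 2 and 4 handled by locality of cliques and the \cc-style argument from the \alg\ analysis, and step 3 resolved by invoking Lem.~\ref{lem:spquery} to confine each shortest-path query to a constant-size neighborhood of grid cells. If anything, you supply details the paper omits --- notably why the four-phase $(i \bmod 2)\times 2 + j \bmod 2 = k$ decomposition of step 3 makes the concurrent Greedy computations non-interfering --- so no gap to report.
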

\begin{proof}
Computing \ldel\ in step 1 of the algorithm takes at most 4 communication rounds~\cite{LCWW03}. As shown in the proof of Thm.~\ref{thm:degreeweight}, computing the clique cover in step 1 takes at most 8 rounds of communication.
Step 2 of the algorithm is restricted to cliques. A node $u$ belongs to at most 4 cliques. For each such clique, $u$ executes step 2 locally, on the neighborhood collected in step 1.
In a few rounds of communication, each node $u$ is also able to collect the information on the grid cells incident to the ones containing $u$. By Lem.~\ref{lem:spquery}, this information suffices to execute step 4 of the algorithm locally. \hfill\ABox
\end{proof}

\section{Conclusions}
We present the first localized algorithm that produces, for any given QUDG $G$ and any $\e > 0$, a
$(1+\e)$-spanner for $G$ of maximum degree $O(1)$ and total weight $O(\w(MST))$, in $O(1)$ rounds of communication.
We also present the first localized algorithm that produces, for any given UDG $G$, a planar
$O(1)$-spanner for $G$ of maximum degree $O(1)$ and total weight $O(\w(MST))$, in $O(1)$ rounds of communication.
Both algorithms require the use of a Global Positioning System (GPS), since each node uses its own
coordinates and the coordinates of its neighbors to take local decisions. Our work leaves open the
question of eliminating the GPS requirement without compromising the quality of the
resulting spanners.


\small

\def\cprime{$'$}
\begin{thebibliography}{10}

\bibitem{AraujoR04}
F.~Ara{\'u}jo and L.~Rodrigues.
\newblock Fast localized {D}elaunay triangulation.
\newblock In {\em OPODIS}, pages 81--93, 2004.

\bibitem{BDEK06}
P.~Bose, L.~Devroye, W.~Evans, and D.~Kirkpatrick.
\newblock On the spanning ratio of {G}abriel graphs and beta-skeletons.
\newblock {\em SIAM J. of Discrete Mathematics}, 20(2):412--427, 2006.

\bibitem{BGMS02}
P.~Bose, J.~Gudmundsson, and M.~Smid.
\newblock Constructing plane spanners of bounded degree and low weight.
\newblock In {\em {ESA} '02: Proc. of the 10th Annual European Symposium on
  Algorithms}, pages 234--246, London, UK, 2002. Springer-Verlag.

\bibitem{BMSU01}
P.~Bose, P.~Morin, I.~Stojmenovic, and J.~Urrutia.
\newblock Routing with guaranteed delivery in ad hoc wireless networks.
\newblock {\em Wireless Networks}, 7(6):609--616, 2001.

\bibitem{BRWZ04}
M.~Burkhart, P.~von Rickenbach, R.~Wattenhofer, and A.~Zollinger.
\newblock Does topology control reduce interference?
\newblock In {\em {Mobi{Hoc}} '04: 5th ACM Int. Symposium of Mobile Ad Hoc
  Networking and Computing}, pages 9--19, 2004.

\bibitem{d-yys-08}
M.~Damian.
\newblock A simple {Y}ao-{Y}ao-based spanner of bounded degree.
\newblock http://arxiv.org/abs/0802.4325v2, 2008.

\bibitem{dj-tcg-89}
G.~Das and D.~Joseph.
\newblock Which triangulations approximate the complete graph?
\newblock In {\em Proceedings of the international symposium on Optimal
  algorithms}, pages 168--192, New York, NY, USA, 1989. Springer-Verlag New
  York, Inc.

\bibitem{DN97}
G.~Das and G.~Narasimhan.
\newblock A fast algorithm for constructing sparse {E}uclidean spanners.
\newblock {\em Int. J. Comput. Geometry Appl.}, 7(4):297--315, 1997.

\bibitem{DasNarasimhan97}
G.~Das and G.~Narasimhan.
\newblock A fast algorithm for constructing sparse {E}uclidean spanners.
\newblock {\em Int. Journal on Computational Geometry and Applications},
  7(4):297--315, 1997.

\bibitem{Das95}
G.~Das, G.~Narasimhan, and J.~Salowe.
\newblock A new way to weigh malnourished {E}uclidean graphs.
\newblock In {\em SODA '95: Proceedings of the sixth annual ACM-SIAM symposium
  on Discrete algorithms}, pages 215--222, Philadelphia, PA, USA, 1995. Society
  for Industrial and Applied Mathematics.

\bibitem{dfs-dg-90}
D.~P. Dobkin, S.~J. Friedman, and K.~J. Supowit.
\newblock Delaunay graphs are almost as good as complete graphs.
\newblock {\em Discrete and Computational Geometry}, 5(4):399--407, 1990.

\bibitem{gg-geo-69}
K.R. Gabriel and R.R. Sokal.
\newblock A new statistical approach to geographic variation analysis.
\newblock {\em Systematic Zoology}, 18:259--278, 1969.

\bibitem{GGH+01}
J.~Gao, L.~Guibas, J.~Hershberger, L.~Zhang, and A.~Zhu.
\newblock Geometric spanner for routing in mobile networks.
\newblock In {\em Mobi{H}oc '01: Proc. of the 2nd {ACM} Int. Symposium on
  Mobile Ad Hoc Networking and Computing}, pages 45--55, 2001.

\bibitem{GK06}
J.~Gudmundsson and C.~Knauer.
\newblock Dilation and detours in geometric networks.
\newblock In T.F. Gonzalez, editor, {\em Handbook on Approximation Algorithms
  and Metaheuristics}, Boca Raton, 2006. Chapman \& Hall/CRC.

\bibitem{JC05}
T.~Johansson and L.~Carr-Moty\v{c}kov\'{a}.
\newblock Reducing interference in ad hoc networks through topology control.
\newblock In {\em DIALM-POMC '05: Proc. of the joint workshop on Foundations of
  mobile computing}, pages 17--23, 2005.

\bibitem{KarpKung00}
B.~Karp and H.~T. Kung.
\newblock Greedy perimeter stateless routing for wireless networks.
\newblock In {\em Mobi{C}om '00: Proc. of the 6th Annual ACM/IEEE Int.
  Conference on Mobile Computing and Networking}, pages 243--254, 2000.

\bibitem{kg-dt-92}
J.M. Keil and C.A. Gutwin.
\newblock The {D}elaunay triangulation closely approximates the complete
  {E}uclidean graph.
\newblock {\em Discrete and Computational Geometry}, 7:13--28, 1992.

\bibitem{li02sparse}
X.~Li, P.~Wan, Y.~Wang, and O.~Frieder.
\newblock Sparse power efficient topology for wireless networks.
\newblock In {\em HICSS'02: Proc. of the 35th Annual Hawaii Int. Conference on
  System Sciences}, volume~9, page 296.2, 2002.

\bibitem{li-localmst-03}
X.-Y. Li.
\newblock Approximate {MST} for {UDG} locally.
\newblock In {\em COCOON'03: Proc. of the 9th Annual Int. Conf. on Computing
  and Combinatorics}, pages 364--373, 2003.

\bibitem{LCW02}
X.~Y. Li, G.~Calinescu, and P.~Wan.
\newblock Distributed construction of planar spanner and routing for ad hoc
  wireless networks.
\newblock In {\em Info{C}om '02: Proc. of the 21st Annual Joint Conference of
  the IEEE Computer and Communications Societies}, volume~3, 2002.

\bibitem{LCWW03}
X.~Y. Li, G.~Calinescu, P.~J. Wan, and Y.~Wang.
\newblock Localized {D}elaunay triangulation with application in ad hoc
  wireless networks.
\newblock {\em IEEE Trans. on Parallel and Distributed Systems},
  14(10):1035--1047, 2003.

\bibitem{lws-ieee-04}
X.-Y. Li, Y.~Wang, and W.-Z. Song.
\newblock Applications of k-local mst for topology control and broadcasting in
  wireless ad hoc networks.
\newblock {\em IEEE Trans. on Parallel and Distributed Systems},
  15(12):1057--1069, 2004.

\bibitem{LiWang04}
X.Y. Li and Y.~Wang.
\newblock Efficient construction of low weight bounded degree planar spanner.
\newblock In {\em {COCOON} '04: Proc. of the 9th Int. Computing and
  Combinatorics Conference}, 2004.

\bibitem{lhs-infocom-03}
J.C.~Hou N.~Li and L.~Sha.
\newblock Design and analysis of a {MST}-based topology control algorithm.
\newblock In {\em IEEE INFOCOM}, 2003.

\bibitem{ns-gsn-07}
G.~Narasimhan and M.~Smid.
\newblock {\em Geometric Spanner Networks}.
\newblock Cambridge University Press, New York, NY, USA, 2007.

\bibitem{RajSurvey02}
R.~Rajaraman.
\newblock Topology control and routing in ad hoc networks: a survey.
\newblock {\em SIGACT News}, 33(2):60--73, 2002.

\bibitem{Smid00}
M.~Smid.
\newblock Closest-point problems in computational geometry.
\newblock In J.-R. Sack and J.~Urrutia, editors, {\em Handbook of Computational
  Geometry}, pages 877--935, Amsterdam, 2000. Elsevier Science.

\bibitem{SWLF04}
W.~Z. Song, Y.~Wang, X.~Y. Li, and O.~Frieder.
\newblock Localized algorithms for energy efficient topology in wireless ad hoc
  networks.
\newblock In {\em Mobi{H}oc '04: Proc. of the 5th ACM Int. Symposium on Mobile
  Ad Hoc Networking and Computing}, pages 98--108, 2004.

\bibitem{god-pr-80}
G.T. Toussaint.
\newblock The relative neighborhood graph of a finite planar set.
\newblock {\em Pattern Recognition}, 12(4):261--268, 1980.

\bibitem{RSWZ05}
P.~von Rickenbach, S.~Schmid, R.~Wattenhofer, and A.~Zollinger.
\newblock A robust interference model for wireless ad-hoc networks.
\newblock In {\em IPDPS '05: Proc. of the 19th IEEE Int. Parallel and
  Distributed Processing Symposium - workshop 12}, page 239.1, 2005.

\bibitem{MSTBroadcast02}
P.-J. Wan, G.~Calinescu, X.-Y. Li, and O.~Frieder.
\newblock Minimum energy broadcast routing in static ad hoc wireless networks.
\newblock {\em ACM Wireless Networking (WINET)}, 8(6):607--617, November 2002.

\bibitem{WangLi03}
Y.~Wang and X.~Y. Li.
\newblock Localized construction of bounded degree and planar spanner for
  wireless ad hoc networks.
\newblock In {\em Proc. of the Joint Workshop on Foundations of Mobile
  Computing}, pages 59--68, 2003.

\bibitem{Yao82}
A.C.-C. Yao.
\newblock On constructing minimum spanning trees in $k$-dimensional spaces and
  related problems.
\newblock {\em SIAM Journal on Computing}, 11(4):721--736, 1982.

\bibitem{ZK-qudg-07}
M.~Zuniga and B.~Krishnamachari.
\newblock An analysis of unreliability and asymmetry in low-power wireless
  links.
\newblock {\em ACM Trans. of Sensor Networks}, 3(2), 2007.

\end{thebibliography}
\def\cprime{$'$}

\end{document}